\newcommand{\ax}{\ensuremath{\text{ax}}}
\newcommand{\rul}{\ensuremath{\delta}}
\newcommand{\sem}[1]{\llbracket #1 \rrbracket}
\newcommand{\AlphIn}{\ensuremath{\Sigma}}
\newcommand{\TreesIn}{\ensuremath{\mathcal{T}_{\AlphIn}}}
\newcommand{\AlphOut}{\ensuremath{\Delta}}
\newcommand{\dom}{{\sf dom}}
\newcommand{\lcp}{{\sf lcp}}
\newcommand{\lcs}{{\sf lcs}}
\newcommand{\size}[1]{\ensuremath{|#1|}}
\newcommand{\shiftPeriod}[2]{\ensuremath{s(#1,#2)}}
\newcommand{\pseudoShift}[2]{\ensuremath{s'(#1,#2)}}
\newcommand{\shiftWord}[2]{\ensuremath{\rho_{#2}\left[#1\right]}}
\newcommand{\ltw}{\textsc{ltw}\xspace}
\newcommand{\ltws}{\textsc{ltw}s\xspace}
\newcommand{\stw}{\textsc{stw}\xspace}
\newcommand{\stws}{\textsc{stw}s\xspace}
\newcommand{\ruleltw}{\ensuremath{q,f \rightarrow u_0 q_1(x_{\sigma(1)}) \dots q_n(x_{\sigma(n)}) u_n}}
\newtheorem{definition}{Definition}
\newtheorem{corollary}{Corollary}
\newtheorem{theorem}{Theorem}
\newtheorem{lemma}{Lemma}
\newtheorem{example}{Example}
\newtheorem{algo}{Algorithm}{\bfseries}{\itshape}
\newtheorem*{thm}{Theorem}{\bfseries}{\itshape}
\newtheorem*{lem}{Lemma}{\bfseries}{\itshape}
\title{Deciding Equivalence of Linear Tree-to-Word Transducers in Polynomial Time}
\author[1]{Adrien Boiret\thanks{This work was partially supported by a grant
from CPER Nord-Pas de Calais/FEDER DATA Advanced data science and technologies 2015-2020}}
\author[2]{Raphaela Palenta}
\affil[1]{CRIStAL, University Lille 1, France}
\affil[2]{Department of Informatics, Technical University of Munich, Germany}
\date{}
\begin{document}

\setlength{\abovedisplayskip}{0pt}
\setlength{\belowdisplayskip}{0pt}
\setlength{\abovedisplayshortskip}{0pt}
\setlength{\belowdisplayshortskip}{0pt}

 \maketitle

\begin{abstract}
We show that the equivalence of deterministic linear top-down tree-to-word transducers is 
 decidable in polynomial time.  
 Linear tree-to-word transducers are non-copying but not necessarily order-preserving 
 and can be used to express XML and other document transformations. 
 The result is based on a partial normal form that
 provides a basic characterization of 
 the languages produced by linear tree-to-word transducers.
\end{abstract}

\section{Introduction}

Tree transformations are widely used in functional programming 
and document processing. Tree transducers are a general model for transforming structured data like a database
in a structured or even unstructured way.
Consider the following internal representation of a client database that 
should be transformed to a table in HTML.
\begin{center}
\includegraphics[scale = 0.8]{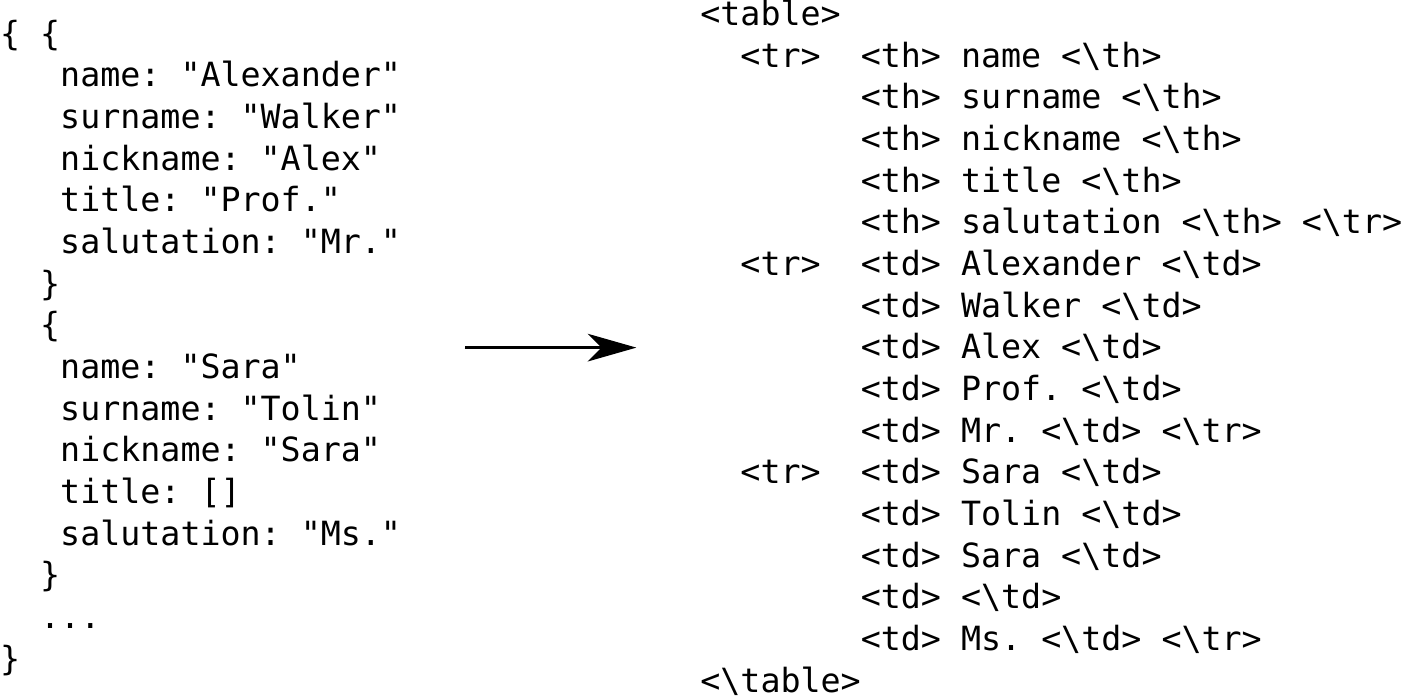}
\end{center}

Top-down tree transducers can be seen as functional programs
that transform trees from the root to the leaves with finite memory.
Transformations where the output is not produced in a structured way or 
where, for example, the output is a string, can be modeled by tree-to-word transducers.

In this paper, we study deterministic linear tree-to-word transducers (\ltws),
a subset of deterministic tree-to-word transducers that are non-copying,
but not necessarily order-preserving. 
Processing the subtrees in an arbitrary order is important to avoid reordering of the 
internal data for different use cases. 
In the example of the client database the names may be needed in different formats, e.g.
\begin{lstlisting}
<salutation> <name> <surname>
<surname>, <name>
<title> <surname>
<title> <surname>, <name>
\end{lstlisting}

The equivalence of unrestricted tree-to-word transducers was a long standing open problem that 
was recently shown to be decidable~\cite{Seidl2015}. 
The algorithm by~\cite{Seidl2015} provides an co-randomized polynomial algorithm for linear transducers.
We show that the equivalence of \ltws is decidable in polynomial time and 
provide a partial normal form. 

To decide equivalence of \ltws, we start in Section~\ref{sec_transducer}
by extending the methods used
for sequential (linear and order-preserving) tree-to-word transducers (\stws), discussed in~\cite{Staworko2009}.
The equivalence for these transducers is decidable in polynomial time~\cite{Staworko2009}.
Moreover a normal form for sequential and linear tree-to-word transducers, 
computable in exponential time, is known~\cite{Laurence2011,Boiret2016}.
Two equivalent \ltws do not necessarily transform their trees
in the same order. However, the differences that can occur are
quite specific and characterized in~\cite{Boiret2016}. We show how they can be identified.
We use the notion of \emph{earliest} states, inspired
by the existing notion of earliest sequential transducers~\cite{Laurence2011}.
In this earliest form, two equivalent \stws can transform subtrees
in different orders only if they fulfill specific properties pertaining
to the periodicity of the words they create.
Computing this normal form is exponential in complexity as the number of 
states may increase exponentially.
To avoid this size increase 
we do not compute these earliest transducers fully,
but rather locally. This means we transform two \ltws with different orders to 
a \emph{partial normal form} in polynomial time (see Section~\ref{sec_PNF})
where the order of their transformation of the different subtrees are the same.
\ltws that transform the subtrees of the input in the same order can be reduced to 
sequential tree-to-word transducers as the input trees can be reordered according 
to the order in the transformation.

A short version of this paper will be published in the proceedings of 
the 20th International Conference on Developements in Language Theory (DLT 2016).

\textbf{Related Work.}
Different other classes of transducers, such as tree-to-tree transducers~\cite{Engelfriet1978}, 
macro tree transducers~\cite{DBLP:journals/jcss/EngelfrietV85}
or nested-word-to-word transducers~\cite{Staworko2009} have been studied. 
Many results for tree-to-tree transducers are known, e.g.\ deciding equivalence~\cite{DBLP:conf/planX/ManethS07},
minimization algorithms~\cite{DBLP:conf/planX/ManethS07} and Gold-style learning algorithms~\cite{DBLP:conf/pods/LemayMN10}.
In contrast, transformations where the output is not generated in a structured way like 
a tree are not that well understood. In macro-tree transducers, the decidability of equivalence
is a well-known and long-standing question~\cite{Engelfriet1980}.
However, the equivalence of linear size increase macro-tree transducers 
that are equivalent to MSO definable transducers 
is decidable~\cite{DBLP:journals/siamcomp/EngelfrietM03,Engelfriet2006}.

\section{Preliminaries}\label{sec_prelim}

Let $\Sigma$ be a ranked alphabet with $\Sigma^{(n)}$ the symbols of rank $n$.
Trees on $\AlphIn$ ($\TreesIn$) are defined inductively: if $f \in \Sigma^{(n)}$,
and $t_1, ... ,t_n\in\TreesIn$, then $f(t_1, ... ,t_n)\in\TreesIn$ is a tree.
Let $\AlphOut$ be an alphabet.
An element $w \in \AlphOut^*$ is a word. For two words $u, v$ we denote the concatenation of these 
two words by $uv$. The length of a word $w$ is denoted by $\size{w}$. 
We call $\varepsilon$ the empty word. We denote $a^{-1}$ the inverse of a symbol $a$ where 
$aa^{-1} = a^{-1}a = \varepsilon$. The inverse of a word $w = u_1 \dots u_n$ is $w^{-1} = u_n^{-1} \dots u_1^{-1}$.

A \emph{context-free grammar} (CFG) is defined as a tuple
$(\AlphOut,N,S,P)$, where $\AlphOut$ 
is the alphabet of $G$, $N$ is a finite set of \emph{non-terminal symbols}, 
$S\in N$ is the initial non-terminal of $G$, $P$ is a finite set of rules of form $A\rightarrow w$,
where $A\in N$ and $w\in(\AlphOut \cup N)^*$. 
A CFG is deterministic if each non-terminal has at most one rule. 

We define the language $L_G(A)$ of a non-terminal $A$
recursively:
if $A\rightarrow u_0A_1u_1 ... A_nu_n$ is a rule of $P$,
with $u_i$ words of $\AlphOut^*$ and $A_i$ non-terminals of $N$,
and $w_i$ a word of $L_G(A_i)$,
then $u_0w_1u_1 ... w_n u_n$ is a word of $L_G(A)$.
We define the context-free language $L_G$ of a context-free grammar $G$ as $L_G(S)$.

A \emph{straight-line program} (SLP) is a deterministic CFG that produces exactly one word.
The word produced by an SLP $(\AlphOut,N,S,P)$ is called $w_S$.

We denote the \emph{longest common prefix of all words} of a language $L$ by $\lcp(L)$. Its \emph{longest common suffix} is $\lcs(L)$.

A word $u$ is said to be \emph{periodic} of period $w$
if $w$ is the smallest word such that $u\in w^*$.
A language $L$ is said to be \emph{periodic} of period $w$
if $w$ is the smallest word such that $L\subseteq w^*$.

A language $L$ is \emph{quasi-periodic} on the left (resp. on the right)
of handle $u$ and period $w$ if $w$ is the smallest word such that 
$L\subseteq uw^*$ (resp. if $L\subseteq w^*u$).
A language is quasi-periodic if it is quasi-periodic on the right or left.
If $L$ is a singleton or empty, it is periodic of period
$\varepsilon$.
Iff $L$ is periodic, it is quasi-periodic
on the left and the right of handle $\varepsilon$. 
If $L$ is quasi-periodic on the left (resp. right) then 
$\lcp(L)$ (resp. $\lcs(L)$) is the shortest word of $L$.

\section{Linear Tree-to-Word Transducers}\label{sec_transducer}

A \emph{linear tree-to-word transducer} (\ltw) is a tuple $M=(\Sigma, \Delta, Q, \ax, \rul)$ where
\begin{compactitem}
\item $\Sigma$ is a ranked alphabet,
\item $\Delta$ is an alphabet of output symbols,
\item $Q$ is a finite set of states,
\item the axiom $\ax$ is of the form $u_0q(x)u_1$,
where $q\in Q$ and $u_0, u_1 \in\AlphOut^*$,
\item $\rul$ is a set of rules of the form 
$\ruleltw$
where $q, q_1, \dots, q_n \in Q$, 
$f \in \AlphIn$ of rank $n$, $u_0, \dots, u_n \in\AlphOut^*$
and $\sigma$ is a permutation from $\{1, \dots, n\}$ to $\{1, \dots, n\}$.
There is at most one rule per pair $q,f$.
\end{compactitem}
The partial function $\sem M_q$ of a state $q$ on an input tree $f(t_1, \dots, t_n)$ 
is defined inductively as 
\begin{compactitem}
\item $u_0 \sem{M}_{q_1}(t_{\sigma(1)}) \dots \sem{M}_{q_n}(t_{\sigma(n)}) u_n,$ 
if $q, f \rightarrow u_0 q_1(x_{\sigma(1)}) \dots q_n(x_{\sigma(n)}) u_n \in \delta$ 
\item undefined, if $q,f$ is not defined in $\delta$.
\end{compactitem}
The partial function $\sem M$ of an \ltw $M$ with axiom $u_0q(x)u_1$
on an input tree $t$ is defined as $\sem M(t)=u_0 \sem M_q(t) u_1$.

Two \ltws $M$ and $M'$ are equivalent if $\sem M = \sem{M'}$.

A \emph{sequential tree-to-word transducer} (\stw) is an \ltw where
for each rule of the form $q,f\rightarrow u_0q_1(x_{\sigma(1)})u_1 \dots q_n(x_{\sigma(n)})u_n$,
$\sigma$ is the identity on $1 \dots n$.

We define \emph{accessibility} of states as the transitive and reflexive closure of appearance in a
rule. This means state  $q$ is accessible from itself, and 
if \ruleltw, and $q$ is accessible from $q'$, then all states $q_i$, $1 \leq i \leq n$,
are accessible from $q'$. 

We denote by $\dom(M)$ (resp. $\dom(q)$) the domain of an \ltw $M$ (resp. a state $q$), i.e.\ all
trees $t\in\TreesIn$ such that $\sem M(t)$ is defined (resp. $\sem M_q(t)$). 
We only consider \ltws with non-empty domains and assume
w.l.o.g. that no state $q$ in an \ltw has an empty domain 
by eliminating transitions using states with empty domain.

We denote by $L_M$ (resp. $L_q$) the range of $\sem M$
(resp. $\sem M_q$), i.e. the set of all images $\sem M(t)$
(resp. $\sem M_q(t)$).
The languages $L_M$ and $L_q$ for each $q\in Q$ are all
context-free languages. We call a state $q$ \emph{(quasi-)periodic} if $L_q$ is (quasi-)periodic.

Note that a word $u$ in a rule of an \ltw can be represented by an
SLP without changing the semantics of the \ltw. 
Therefore a set of SLPs can be added to the transducer and a word 
on the right-hand side of a rule can be represented by an SLPs. 
The decidability of equivalence of \stws in polynomial time still holds true 
with the use of SLPs. 
The advantage of SLPs is that they may compress the size of a word as the following example shows.

\begin{example}\label{ex_expSize}
 We define an SLP $G=(\AlphOut,N,S_0,P)$,
where $N$ is a set $\{S_0,...,S_n\}$, the initial non-terminal is $S_0$,
and $P$ is the set of rules $S_0\rightarrow S_1S_1$,
$S_1\rightarrow S_2S_2$, $\dots$, $S_{n-1}\rightarrow S_nS_n$,
and $S_n\rightarrow a$.
This SLP produces the word $a^{2^{n}}$.
 $G$ has $n+1$ non-terminals and $n+1$ rules. 
Thus, $G$ produces a word that is exponential in the size of $G$.
\end{example}

The results of
this paper require SLP compression to avoid exponential blow-up.
SLPs are used to prevent exponential blow-up in~\cite{Plandowski1995},
where morphism equivalence on context-free languages is decided in polynomial time.

The equivalence problem for sequential tree-to-word transducer
can be reduced to the 
morphism equivalence problem for context-free languages~\cite{Staworko2009}.
This reduction relies on the fact that STWs transform their subtrees in the same order.
As \ltws do not necessarily transform their subtrees in the same order 
the result cannot be applied on \ltws in general.
However, if two \ltws transform their subtrees in the same
order, then the same reduction can be applied.
To formalize that two \ltws transform their subtrees in the same order
we introduce the notion of state co-reachability. 
 Two states $q_1$ and $q_2$ of \ltws $M_1$, $M_2$, respectively, are co-reachable 
if there is an input tree such that the two states are assigned to the same node of the input tree 
in the translations of $M_1$, $M_2$, respectively.

Two \ltws are \emph{same-ordered} if
for each pair of co-reachable states $q_1,\,q_2$ and
for each symbol $f\in\AlphIn$,
neither $q_1$ nor $q_2$ have a rule for $f$,
or if 
$q_1,f \rightarrow u_0 q'_1(x_{\sigma_1(1)}) \dots q'_n(x_{\sigma_1(n)}) u_n$ and 
 $q_2,f \rightarrow v_0 q''_1(x_{\sigma_2(1)}) \dots q''_n(x_{\sigma_2(n)}) v_n$
 are rules of $q_1$ and $q_2$, then $\sigma_1=\sigma_2$.

If two \ltws are same-ordered the input trees can be reordered according to the order in the 
transformations. Therefore for each \ltw a tree-to-tree transducer is constructed that transforms 
the input tree according to the transformation in the \ltw. Then all permutations $\sigma$ in 
the \ltws are replaced by the identity.
Thus the \ltws can be handled as \stws and therefore the equivalence 
is decidable in polynomial time.
\begin{theorem}\label{thm_eqSameOrder}
The equivalence of same-ordered \ltws is decidable in polynomial time.
\end{theorem}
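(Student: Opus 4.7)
My plan is to realize the reduction sketched just before the theorem statement: I transform the same-ordered \ltws $M_1$ and $M_2$ into \stws $M'_1$ and $M'_2$ that operate on a uniformly reordered version of the input, so that equivalence of $M_1$ and $M_2$ reduces to equivalence of \stws, decidable in polynomial time by~\cite{Staworko2009} (together with SLP-compressed outputs).

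First, I would compute the set $\mathcal{C}\subseteq Q_1\times Q_2$ of co-reachable state pairs by a product construction. Starting from the pair of initial states selected by the two axioms, I propagate: whenever $(q_1,q_2)\in\mathcal{C}$ and both have rules on some $f\in\AlphIn$, the same-ordered hypothesis forces a common permutation $\sigma$, and each pair $(q'_{1,i},q'_{2,i})$ of corresponding child states is added to $\mathcal{C}$. This is standard product reachability and is computed in polynomial time. Along the way I also check that $\dom(M_1)=\dom(M_2)$, which is necessary for equivalence and decidable in polynomial time on the underlying top-down tree automata.

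Second, I would define a tree-to-tree reordering $\tau$ over an enriched ranked alphabet $\Sigma'=\{(f,q_1,q_2):f\in\AlphIn,\,(q_1,q_2)\in\mathcal{C}\}$. Intuitively, $\tau$ runs $M_1$ and $M_2$ in parallel on an input and relabels each node $f(t_1,\dots,t_n)$ visited by the pair $(q_1,q_2)$ as $(f,q_1,q_2)(\tau(t_{\sigma(1)}),\dots,\tau(t_{\sigma(n)}))$, where $\sigma$ is the permutation common to the two rules. Because the same-ordered property guarantees that $\sigma$ is determined by $(q_1,q_2,f)$, the map $\tau$ is well-defined and a bijection from $\dom(M_1)\cap\dom(M_2)$ onto its image. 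I then build $M'_1,M'_2$ over $\Sigma'$ by, for each rule $q_1,f\to u_0q'_1(x_{\sigma(1)})\dots q'_n(x_{\sigma(n)})u_n$ of $M_1$ and each relevant $q_2$, creating a rule $q_1,(f,q_1,q_2)\to u_0q'_1(x_1)\dots q'_n(x_n)u_n$ (and symmetrically for $M_2$); the replacement of $\sigma$ by the identity is compensated by the reordering already baked into $\tau$. Both transducers have size polynomial in $|M_1|\cdot|M_2|\cdot|\AlphIn|$, and $\sem{M_i}(t)=\sem{M'_i}(\tau(t))$ holds by construction for every $t\in\dom(M_1)\cap\dom(M_2)$.

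Finally, combining these pieces, $M_1\equiv M_2$ iff $\dom(M_1)=\dom(M_2)$ and $M'_1\equiv M'_2$, and the second condition is decided in polynomial time by the \stw equivalence procedure of~\cite{Staworko2009}, adapted to SLP-compressed right-hand sides. The step I expect to require most care is verifying that $\tau$ is genuinely a bijection between $\dom(M_1)\cap\dom(M_2)$ and $\dom(M'_1)\cap\dom(M'_2)$ and that the semantics are preserved through every accessible state pair; once this is established, all the algorithmic work is absorbed by the reduction to \stw equivalence, so no new polynomial-time tools are required.
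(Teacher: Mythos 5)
Your proposal is correct and follows essentially the same route as the paper: the paper's (brief) argument is exactly to reorder the input trees by a tree-to-tree relabelling determined by the common permutations of co-reachable states, replace each $\sigma$ by the identity, and invoke the polynomial-time \stw equivalence test of~\cite{Staworko2009} with SLP-compressed outputs. Your product construction over co-reachable state pairs and the explicit domain-equality check merely flesh out details the paper leaves implicit.
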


\subsection{Linear Earliest Normal Form}

In this section we introduce the two key properties that are used to 
build a normal form for linear tree-to-word transducers, namely the \emph{earliest} and 
\emph{erase-ordered} properties. The earliest property means that the output is produced 
as early as possible, i.e.\ the longest common prefix (resp. suffix) of $L_q$ 
is produced in the rule in which $q$ occurs, and as left as possible. The erase-ordered property means that 
all states that produce no output are ordered according to the input tree and pushed 
to the right in the rules.

An \ltw is in \emph{earliest form} if
\begin{compactitem}
\item each state $q$ is \emph{earliest}, i.e.\ $\lcp(L_q)=\lcs(L_q)=\varepsilon$,
\item and for each rule $q,f \rightarrow u_0 q_1(x_{\sigma(1)}) \dots q_n(x_{\sigma(n)}) u_n$,
for each $i, 1 \leq i \leq n$, $\lcp(L_{q_i}u_i)=\varepsilon$.
\end{compactitem}

In~\cite[Lemma 9]{Boiret2016} it is shown that for each \ltw $M$
an equivalent earliest \ltw $M'$ can be constructed in exponential time. 
Intuitively, if $\lcp(L_q) = v \neq \varepsilon$ (resp. $\lcs(L_q) = v \neq \varepsilon$) 
then $q'$ is constructed with $L_{q'} = v^{-1} L_q$ (resp. $L_{q'} = L_q v^{-1}$) and $q(x)$ 
is replaced by $v q'(x)$ (resp. $q'(x) v$). If $\lcp(L_q u) = v \neq \varepsilon$ and $v$ 
is a prefix of $u = vv'$ 
then we push $v$ through $L_q$ by constructing $q'$ with $L_{q'} = v^{-1} L_q v$ and replace 
$q(x) u$ by $v q'(x)v'$.

Note that the construction to build the earliest form $M'$ of an
\ltw $M$ creates a same-ordered $M'$.
Furthermore, if a state $q$ of $M$ and a state $q'$ of $M'$
are co-reachable, then $q'$ is an ``earliest'' version of $q$,
where some word $u$ was pushed out of the production of $q$
to make it earliest, and some word $v$ was pushed through
the production of $q$ to ensure that the rules have the right
property:
there exists $u,\,v\in\AlphOut^*$ such that for all
$t\in\dom(q)$, $\sem {M'}_{q'}(t)=v^{-1}u^{-1}\sem {M}_{q}(t)v$.

\begin{theorem}\label{thm_earliestLTW}
For each \ltw an equivalent same-ordered and earliest \ltw can be constructed in exponential time.
\end{theorem}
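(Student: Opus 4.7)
The plan is to adapt the construction of \cite[Lemma 9]{Boiret2016}, emphasizing that its rewrites never alter the permutations $\sigma$ appearing in rules — they only move output words around and introduce fresh states. Hence same-orderedness of the constructed \ltw $M'$ with the input \ltw $M$ will come essentially for free from the construction, and only the earliest and complexity parts require real work.

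Concretely, starting from $M$ I would apply two local rewrites exhaustively until both earliest conditions hold. The first rewrite makes states earliest: if $\lcp(L_q) = v \neq \varepsilon$, replace $q$ by a fresh state $q_v$ with $L_{q_v} = v^{-1} L_q$, and prepend $v$ wherever $q(x)$ is called; handle $\lcs$ symmetrically. The second rewrite makes rules earliest: whenever a right-hand side contains $q_i(x_{\sigma(i)}) u_i$ with $\lcp(L_{q_i} u_i) = v \neq \varepsilon$, push $v$ through $q_i$ by introducing a fresh state $q'$ whose semantics is $v^{-1} \sem{M}_{q_i}(\cdot)\, v$ and rewriting $u_i$ accordingly. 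Each rewrite preserves the global semantics by elementary word cancellation, so $\sem{M'} = \sem{M}$ holds throughout. By induction, every fresh state corresponds to a triple $(q, u, v)$ for which $\sem{M'}_{q'}(\cdot) = v^{-1} u^{-1} \sem{M}_q(\cdot)\, v$.

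The main obstacle is bounding the number of fresh states and proving termination. I would argue that the words $u$ and $v$ that can arise are drawn from a finite pool determined by the output fragments of $M$ — prefixes of $\lcp(L_q)$ and $\lcs(L_q)$ for the $u$-component, and periodic factors produced by successive push-throughs for the $v$-component. Because outputs may already be exponentially long under SLP compression (see Example~\ref{ex_expSize}), this pool can reach exponential size, and so may the state set of $M'$; but it is still finite, giving both termination and the exponential time bound. Once termination is established, same-orderedness follows by inspection: any co-reachable pair of states in $M$ and $M'$ is related by the correspondence $(q, u, v) \mapsto q$, and since the permutations $\sigma$ are copied verbatim by both rewrites, the rules triggered on any common input symbol must share the same permutation, which is precisely the definition of same-ordered.
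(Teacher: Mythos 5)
Your proposal follows the same route as the paper, which itself defers to the construction of \cite[Lemma 9]{Boiret2016}: apply the two local rewrites (pulling out $\lcp$/$\lcs$ of $L_q$, and pushing prefixes through via states realizing $v^{-1}\sem{M}_q(\cdot)v$), observe that the permutations $\sigma$ are copied verbatim so same-orderedness and the correspondence $\sem{M'}_{q'}(t)=v^{-1}u^{-1}\sem{M}_q(t)v$ come for free, and bound the fresh states by an exponential pool of $(q,u,v)$ triples. This matches the paper's argument in both structure and level of detail.
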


The exponential time complexity is caused by a potential exponential size increase in the number of 
states as it is shown in the following example.

We call a state $q$ that produces only the empty word, i.e. $L_q = \{\varepsilon\}$, an \emph{erasing state}.
As erasing states do not change the transformation and can occur at any position 
in a rule we need to fix their position for a normal form.

 An \ltw $M$ is \emph{erase-ordered} if for each rule \ruleltw\ in $M$, 
 if $q_i$ is erasing then for all $j \geq i$, $q_j$ is erasing, 
 $\sigma(i) \leq \sigma(j)$ and $u_j = \varepsilon$. 

We test whether $L_q = \{\varepsilon\}$ in polynomial time and then reorder a rule 
according to the erase-ordered property. If an \ltw is earliest it is still 
earliest after the reordering.

\begin{lemma}[extended from{~\cite[Lemma 18]{Boiret2016}}]\label{lem_eraseOrdered}
 For each (earliest) \ltw an equivalent (earliest) erase-ordered \ltw can be constructed in polynomial time. 
 \end{lemma}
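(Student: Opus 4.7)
The plan is to combine a greatest-fixpoint computation of the erasing states with a purely local, rule-by-rule shuffling that is trivially semantics-preserving, and then to argue that in the earliest case the shuffling makes no harmful changes to the words on the right-hand side.

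First I would characterize the erasing states as the greatest subset $E\subseteq Q$ such that for every $q\in E$ and every rule $q,f \rightarrow u_0 q_1(x_{\sigma(1)}) \dots q_n(x_{\sigma(n)}) u_n$ of $q$, all $u_i$ are $\varepsilon$ and all $q_j$ belong to $E$. Since every state has non-empty domain, this characterization is sound and complete for $L_q=\{\varepsilon\}$, and the greatest fixpoint is reached by starting from $E=Q$ and iteratively removing violators in $|Q|$ rounds. The only non-trivial subroutine is testing whether the SLP associated with some $u_i$ produces $\varepsilon$, which reduces to computing its length and is polynomial.

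Next, for each rule I would partition the calls into the non-erasing positions $i_1<\dots<i_k$ and the erasing positions $j_1<\dots<j_{n-k}$, sort the latter by their $\sigma$-value using some permutation $\tau$, and build the new rule by listing the non-erasing calls first in their original order and the erasing calls last in $\tau$-order, separated by empty words. The $u$-slot between two consecutive retained non-erasing calls is set to the SLP-concatenation of the original $u$'s lying strictly between them, and the final slot (after the last erasing call) is $\varepsilon$. Semantics preservation is immediate: every erasing call contributes $\varepsilon$, so displacing it and absorbing its neighbouring $u$'s yields the same output word. Erase-orderedness holds by construction. Running time is polynomial because concatenation of SLPs is polynomial and we touch each rule once.

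The delicate step is preserving earliness, and this is the step I expect to be the main obstacle because a naive merging of $u$'s could, in principle, create a shared prefix with $L_{q_{i_l}}$ and break the clause $\lcp(L_{q_{i_l}} u_{i_l})=\varepsilon$. The way around this is a single observation: if the original \ltw is earliest and $q_i$ is erasing, then $L_{q_i}=\{\varepsilon\}$, so the earliness clause $\lcp(L_{q_i} u_i)=\varepsilon$ forces $u_i=\varepsilon$. Hence every $u$ that sits immediately after an erasing state is already empty, and the SLP-concatenation in the previous paragraph collapses: each new $u$-slot is literally the original $u_{i_l}$ attached to the surviving non-erasing call. The new earliness clauses then coincide with the old ones at the non-erasing positions, and hold trivially at the erasing positions (language $\{\varepsilon\}$ followed by the word $\varepsilon$). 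Since no state's language is modified, each state remains earliest in the sense $\lcp(L_q)=\lcs(L_q)=\varepsilon$, and the overall procedure stays within polynomial time.
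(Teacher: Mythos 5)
Your proposal is correct and follows essentially the same route the paper takes: detect the erasing states in polynomial time, push them to the right of each rule in input order while absorbing the intervening words into the preceding slot, and observe that the earliest clause $\lcp(L_{q_i}u_i)=\varepsilon$ already forces $u_i=\varepsilon$ whenever $q_i$ is erasing, so the reordering leaves every earliest condition intact. The paper only states this in a two-line sketch citing~\cite[Lemma 18]{Boiret2016}; your greatest-fixpoint test for $L_q=\{\varepsilon\}$ and the explicit earliestness argument are a faithful filling-in of that sketch.
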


\begin{example}
 Consider the rule $q_0, f \rightarrow q_1(x_4) q_2(x_3) q_1(x_2) q_4(x_1)$ 
 where $q_2$ translates trees of the form $f^n(g), n \geq 0$ to $(abc)^n$, 
 $q_4$ translates trees of the form $f^n(g), n \geq 0$ to $(abc)^{2n}$, 
 $q_1$ translates trees of the form $f^n(g), n \geq 0$ to $\varepsilon$.
 Thus the rule is not erase-ordered. We reorder the rule to the 
 equivalent and erase-ordered rule
 $q_0, f \rightarrow q_2(x_3) q_4(x_1) q_1(x_2) q_1(x_4)$.
\end{example}

If two equivalent \ltws are earliest and erase-ordered, then they are not necessarily same-ordered.
For example, the rule $q, f \rightarrow q_4(x_1) q_2(x_3) q_1(x_2) q_1(x_4)$ is equivalent 
to the rule in the above example but the two rules are not same-ordered.
However, in earliest and erase-ordered \ltws,
we can characterize the differences in the orders of
equivalent rules:
Just as two words $u$, $v$ satisfy the equation $uv = vu$
if and only if there is a word $w$ such that 
$u \in w^*$ and $v \in w^*$, the only way for equivalent earliest and erase-ordered \ltws
to not be same-ordered is to switch periodic states.

\begin{theorem}[\cite{Boiret2016}]\label{thm_coreachEquiv}
Let $M$ and $M'$ be two equivalent erase-ordered and earliest \ltws and
$q$, $q'$ be two co-reachable states in $M$, $M'$, respectively. Let

\centerline{$q,f \rightarrow u_0 q_1(x_{\sigma_1(1)}) \dots q_n(x_{\sigma_1(n)}) u_n$ and 
$q',f \rightarrow v_0 q'_1(x_{\sigma_2(1)}) \dots q'_n(x_{\sigma_2(n)}) v_n$}

\noindent be two rules for $q$, $q'$. Then 
\begin{compactitem}
  \item for $k < l$ such that $\sigma_1(k) = \sigma_2(l)$, all $q_i$, $k \leq i \leq l$, are periodic of the same period 
 and all $u_j = \varepsilon$, $k \leq j < l$, 
 \item for $k, l$ such that 
$\sigma_1(k) = \sigma_2(l)$, $\sem M_{q_k}=\sem {M'}_{q'_{l}}$.
\end{compactitem}

\end{theorem}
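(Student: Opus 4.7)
The plan is to first establish the second bullet — that corresponding co-reachable states have equal semantics — and then to use it to reduce the rule equality to a pure word equation with variables, from which the periodicity and the emptiness of the $u_j$'s follow by combinatorics on words.

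For the second bullet, observe that whenever $q$ and $q'$ are both applied at a node labelled $f$, both rules assign $q_k$ and $q'_l$ to the same child $j = \sigma_1(k) = \sigma_2(l)$, so $q_k$ and $q'_l$ are themselves co-reachable. Picking any witness tree containing such a node, equivalence of $M$ and $M'$ yields fixed words $A, B, A', B'$ with $A\, \sem{M}_{q_k}(t)\, B = A'\, \sem{M'}_{q'_l}(t)\, B'$ for every $t$ in the common domain. Taking the longest common prefix of both sides over $t$ annihilates the inner factor because $\lcp(L_{q_k}) = \lcp(L_{q'_l}) = \varepsilon$ by earliness, yielding $A = A'$; a symmetric longest common suffix argument gives $B = B'$, and hence $\sem{M}_{q_k} = \sem{M'}_{q'_l}$.

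Using this, for any input $f(t_1, \dots, t_n)$ in the common domain, set $W_j := \sem{M}_{q_{\sigma_1^{-1}(j)}}(t_j) = \sem{M'}_{q'_{\sigma_2^{-1}(j)}}(t_j)$. Equivalence of the two rules becomes the word equation
$u_0 W_{\sigma_1(1)} u_1 \cdots W_{\sigma_1(n)} u_n = v_0 W_{\sigma_2(1)} v_1 \cdots W_{\sigma_2(n)} v_n$,
in which every $W_j$ varies independently in $L_{q_{\sigma_1^{-1}(j)}}$. Fix $k < l$ with $\sigma_1(k) = \sigma_2(l) = j$ and vary only $t_j$, freezing every other child at a suitably chosen value, so the equation collapses to $A\,W_j\,B = A'\,W_j\,B'$ with $A, A', B, B'$ fixed. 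Because the sets of children strictly left of $j$ in the two rules have sizes $k-1$ and $l-1$ respectively, one can arrange the frozen values so that $|A'| > |A|$; writing $A' = A\gamma$ with $\gamma \neq \varepsilon$ and, by length balance, $B = \delta B'$, the equation becomes the classical conjugation equation $W_j \delta = \gamma W_j$ whose solutions are $W_j = u(vu)^m$ with $\gamma = uv$, $\delta = vu$. Earliness $\lcp(L_{q_k}) = \varepsilon$ forces $u = \varepsilon$, hence $\gamma = \delta =: w$ and $L_{q_k} \subseteq w^*$, proving that $q_k$ is periodic of period dividing $w$; the symmetric argument gives the same conclusion for $q'_l$, with the same language and therefore the same period.

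To extend periodicity to every $q_i$ for $k \leq i \leq l$ and to show $u_i = \varepsilon$ for $k \leq i < l$, I iterate the same argument: for each intermediate position $i$, isolating the child $\sigma_1(i)$ in the same word equation produces either another conjugation equation (yielding periodicity with the same $w$) or, if the state is constant, compatibility with the erase-ordered hypothesis. Once every $W_{\sigma_1(i)}$ lies in $w^*$, the constants $u_i$ between them must also lie in $w^*$, and if any of them were non-empty it could be shifted leftwards through the adjacent periodic blocks, contradicting the rule-level earliness clause $\lcp(L_{q_i} u_i) = \varepsilon$ and forcing $u_i = \varepsilon$. The main obstacle I anticipate is the length-choice step $|A'| > |A|$: if the intermediate states all have constant-length outputs the degree of freedom disappears, and one must argue by content rather than by length and invoke the erase-ordered convention to ensure the recovered periods at each intermediate state are mutually consistent — this careful bookkeeping of the degenerate cases is the combinatorial heart of the proof.
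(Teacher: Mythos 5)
This theorem is imported from \cite{Boiret2016}; the present paper gives no proof of it, so your attempt can only be measured against the cited argument, whose overall strategy --- freeze all subtrees but one, reduce to a word equation $A\,W\,B = A'\,W\,B'$, solve the resulting conjugacy equation $\gamma W = W\delta$, and let the earliest property kill the handle --- you have correctly identified. The gaps are nevertheless real. Your derivation of the second bullet rests on the claim that the longest common prefix over $t$ of $A\,\sem{M}_{q_k}(t)\,B$ equals $A$ because $\lcp(L_{q_k})=\varepsilon$. This inference is false in general: that prefix equals $A\cdot\lcp(L_{q_k}\{B\})$, and $\lcp(L_{q_k}\{B\})$ can be nonempty when $\varepsilon\in L_{q_k}$ and $B$ begins like a nonempty word of $L_{q_k}$ (take $L_{q_k}=\{\varepsilon,a\}$ and $B$ starting with $a$; both $\lcp$ and $\lcs$ of $L_{q_k}$ are empty, so this state is earliest). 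The per-rule clause $\lcp(L_{q_i}u_i)=\varepsilon$ of the earliest form exists precisely to control this effect, but your $B$ is an entire right context crossing rule boundaries, so one needs an inductive lemma asserting that the $\lcp$ of the full right context of an occurrence of an earliest state is empty; that lemma is the actual content of ``co-reachable earliest states are equivalent'' in \cite{Laurence2011,Boiret2016} and cannot be replaced by $\lcp(L_{q_k})=\varepsilon$ alone.

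The first bullet has two further holes. You flag one yourself: there may be no way to arrange $|A'|>|A|$, for instance when every state separating the two occurrences has constant output length (in particular the erasing states, whose languages are $\{\varepsilon\}$). This is not a degenerate case to be ``bookkept'' afterwards --- it is exactly where the erase-ordered hypothesis must enter, and it is the case split around which the argument of \cite{Boiret2016} is organized (an induction on the \emph{leftmost} position where $\sigma_1$ and $\sigma_2$ disagree, rather than an arbitrary pair $k<l$, which is also what makes the propagation to the intermediate positions $k<i<l$ go through; as written, your ``iterate the same argument'' does not explain why the child $\sigma_1(i)$ yields a conjugacy equation with the \emph{same} period $w$). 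The other hole is the derivation of $u_j=\varepsilon$: for a periodic $q_j$ with $L_{q_j}=(ab)^*$ and $u_j=b$ one has $\lcp(L_{q_j}u_j)=\varepsilon$, so the rule-level earliness clause does not by itself force $u_j=\varepsilon$. You must first prove $u_j\in w^*$ from the global equation (e.g.\ by specializing the surrounding outputs to $\varepsilon$, which is possible only after showing $\varepsilon\in L_{q_i}$ for the periodic earliest states) and only then does $\lcp(L_{q_j}u_j)=\varepsilon$ give $u_j=\varepsilon$. So the skeleton is the standard one, but the three load-bearing steps --- emptiness of the $\lcp$ of a full right context, the treatment of the no-length-gap case via erase-ordering, and $u_j\in w^*$ before $u_j=\varepsilon$ --- are missing.
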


As the subtrees that are not same-ordered in two equivalent earliest and erase-ordered states 
are periodic of the same period the order of these can be changed without changing the semantics.
Therefore the order of these subtrees can be fixed such that equivalent earliest and erase-ordered 
\ltws are same-ordered. Then the equivalence is decidable in polynomial time, see Theorem~\ref{thm_eqSameOrder}. 
However, building the earliest form of an \ltw is in exponential time. 

To circumvent this difficulty, we will show that the first part of Theorem~\ref{thm_coreachEquiv} still holds even on a
\emph{partial normal form}, where only
quasi-periodic states are earliest and the longest common prefix of 
parts of rules $q(x) u$ with $L_{q} u$ being quasi-periodic is the empty word.

\begin{theorem}\label{thm_periodic}
 Let $M$ and $M'$ be two equivalent erase-ordered \ltws such that 
 \begin{compactitem}
  \item all quasi-periodic states $q$ are earliest, i.e.\ $\lcp(q) = \lcs(q) = \varepsilon$
  \item for each part $q(x)u$ of a rule where $L_{q} u$ is quasi-periodic, $\lcp(L_{q} u) = \varepsilon$ 
 \end{compactitem}
 Let $q$, $q'$ be two co-reachable states in $M$, $M'$, respectively and
 
 \centerline{$q,f \rightarrow u_0 q_1(x_{\sigma_1(1)}) \dots q_n(x_{\sigma_1(n)}) u_n$  and 
$q',f \rightarrow v_0 q'_1(x_{\sigma_2(1)}) \dots q'_n(x_{\sigma_2(n)}) v_n$}

\noindent be two rules for $q$, $q'$. Then 
  for $k < l$ such that $\sigma_1(k) = \sigma_2(l)$, all $q_i$, $k \leq i \leq l$, are periodic of the same period 
 and all $u_j = \varepsilon$, $k \leq j < l$.
\end{theorem}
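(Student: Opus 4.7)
The plan is to exploit the global word equation obtained from $\sem M = \sem{M'}$ applied to trees of the form $f(t_1,\dots,t_n)$ and extract periodicity by varying individual argument subtrees. Since $q$ and $q'$ are co-reachable, cancelling the common outer context gives, for every compatible family $t_1, \ldots, t_n$,
\[u_0 \prod_{i=1}^n \sem M_{q_i}(t_{\sigma_1(i)})\, u_i \;=\; v_0 \prod_{j=1}^n \sem{M'}_{q'_j}(t_{\sigma_2(j)})\, v_j. \qquad (\star)\]
I would extract the two conclusions in three phases.

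Fix $p = \sigma_1(k) = \sigma_2(l)$ and vary only $t_p$ in $(\star)$. The equation collapses to $A \, \sem M_{q_k}(t_p) \, B = A' \, \sem{M'}_{q'_l}(t_p) \, B'$ for fixed constants $A, A', B, B'$ depending on the other subtrees. Because $k < l$, I would choose the remaining subtrees so that some intermediate image in the second rule contributes a nonempty prefix, yielding $|A| < |A'|$; the degenerate situation where no such choice exists collapses via the erase-ordered hypothesis, which forbids a nonempty range of erasing states appearing strictly to the left. Writing $A' = A \alpha$ with $\alpha \neq \varepsilon$, together with the symmetric analysis on the right, I obtain a word equation of the form $\sem M_{q_k}(t_p)\, \gamma = \alpha\, \sem{M'}_{q'_l}(t_p)$ valid for every $t_p$. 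Standard combinatorics on words then forces $L_{q_k}$ to be quasi-periodic with a period dividing $\alpha$; the first hypothesis makes $q_k$ earliest, so $L_{q_k}$ is in fact periodic of some period $\pi$ with $\alpha \in \pi^*$.

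Next I propagate this period to every state in the swap region. Varying a single subtree at position $\sigma_1(i)$ for $k < i \le l$ in $(\star)$ and subtracting two such instances, I isolate the image $\sem M_{q_i}(t)$ as a factor that must commute with a power of $\alpha$; by the classical equivalence $uv = vu \Longleftrightarrow u, v \in w^*$, this forces $L_{q_i} \subseteq \pi^*$, i.e.\ $q_i$ is periodic of the same period as $q_k$. Repeating for each intermediate position and for $q_l$ itself, all $q_i$ for $k \le i \le l$ become periodic of the common period $\pi$, settling the first conclusion.

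Finally, restricting $(\star)$ to the swap region and cancelling the parts now known to be periodic leaves the concatenation $u_k \, \sem M_{q_{k+1}}(\cdot) \, u_{k+1} \ldots u_{l-1}$ constrained to lie in $\pi^*$ for every choice of subtrees. Since each intermediate image is in $\pi^*$, each $u_j$ for $k \le j < l$ is itself in $\pi^*$. Then $L_{q_j} u_j \subseteq \pi^*$ is quasi-periodic, so by the second hypothesis $\lcp(L_{q_j} u_j) = \varepsilon$; but the longest common prefix of a set of powers of $\pi$ is the smallest such power, and it equals $\varepsilon$ only if $\varepsilon$ itself belongs to $L_{q_j} u_j$, forcing both $\varepsilon \in L_{q_j}$ and $u_j = \varepsilon$. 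The main obstacle I expect is the propagation phase: the commutation argument must track where each intermediate subtree is processed in the other rule and package the resulting word equations into the clean $uv = vu$ form; the degenerate cases of the first phase also deserve careful treatment via the erase-ordered hypothesis.
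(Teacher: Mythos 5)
Your endgame matches the paper's: once the states in the swap region are known to be quasi-periodic and the parts $q_j(x)u_j$ are known to be quasi-periodic, the two hypotheses upgrade them to periodic and force $u_j=\varepsilon$ exactly as in the paper (your observation that $\lcp(L_{q_j}u_j)=\varepsilon$ for a quasi-periodic language forces $\varepsilon\in L_{q_j}u_j$, hence $u_j=\varepsilon$, is the same step). The gap is in how you get there. The paper does not argue directly on the word equation of $M$ and $M'$: it passes to the earliest transducers $M_e$, $M'_e$ (which exist and are same-ordered with $M$, $M'$ by Theorem~\ref{thm_earliestLTW} --- only their existence is needed, not their polynomial construction), applies Theorem~\ref{thm_coreachEquiv} there to obtain that the order-crossing states $q^e_i,\dots,q^e_j$ are periodic of the same period with empty intermediate words, and then transfers back via the relation $\sem{M_e}_{q^e_i}(t)=v^{-1}u^{-1}\sem M_{q_i}(t)v$, which turns periodicity of $q^e_i$ into quasi-periodicity of $q_i$. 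You are instead trying to re-derive the content of Theorem~\ref{thm_coreachEquiv} from scratch, in the harder non-earliest setting, and the derivation does not go through as sketched.

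Concretely, in your first phase the equation $A\,\sem M_{q_k}(t_p)\,B=A'\,\sem{M'}_{q'_l}(t_p)\,B'$ has two \emph{different} functions of $t_p$ on the two sides; writing $A'=A\alpha$ only gives $\sem{M'}_{q'_l}(t_p)=\alpha^{-1}\sem M_{q_k}(t_p)\gamma$, a bijective re-parametrization that constrains $L_{q_k}$ not at all. The classical $xw=wy$ argument needs the \emph{same} word to reappear shifted against itself, and that only happens after you exploit the crossing position $p'=\sigma_1(i)=\sigma_2(j)$ with $i>k$, $j<l$, varying $t_p$ and $t_{p'}$ \emph{independently} so that their images occur in opposite orders on the two sides; you name this as ``the main obstacle'' but do not resolve it, and in the non-earliest setting it is further complicated by the nontrivial common prefixes of the $L_{q_i}$. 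Your second phase (``subtracting two instances ... must commute with a power of $\alpha$'') inherits the same unproved core. So as written the proposal assumes precisely the combinatorial fact that the paper imports as a black box; either carry out that double-variation argument in full, or follow the paper and reduce to Theorem~\ref{thm_coreachEquiv} on the earliest forms.
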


\section{Partial Normal Form}\label{sec_PNF}

In this section we introduce a partial normal form
for \ltws that does not suffer from the exponential blow-up of the
earliest form. 
Inspired by Theorem~\ref{thm_periodic},
we wish to solve order differences by switching adjacent
periodic states of the same period. 
Remember that the earliest form of a state $q$ is constructed by removing the longest common 
prefix (suffix) of $L_q$ to produce this prefix (suffix) earlier. 
It follows that all non-earliest states from which $q$ can be 
constructed following the earliest form are quasi-periodic.

We show that building the earliest form of a quasi-periodic state 
or a part of a rule $q(x) u$ with $L_q u$ being quasi-periodic is
in polynomial time. Therefore building the following partial normal form is in polynomial time.

\begin{definition}\label{def_partialnormalform}
 A linear tree-to-word transducer is in \emph{partial normal form} if
 \begin{compactenum}
  \item all quasi-periodic states are earliest,
  \item it is erase-ordered and 
  \item for each rule \ruleltw\ if  
  $L_{q_i} u_i L_{q_{i+1}}$ is quasi-periodic
  then $q_i(x_{\sigma(i)}) u_i q_{i+1}(x_{\sigma(i+1)})$ is earliest and 
    $\sigma(i) < \sigma(i+1)$.
 \end{compactenum}
\end{definition}

\subsection{Eliminating Non-Earliest Quasi-Periodic States}
In this part, we show a polynomial time algorithm to build an earliest form of a quasi-periodic state. 
From which an equivalent \ltw can be constructed in polynomial time 
such that any quasi-periodic state is earliest, i.e.\ $\lcp(L_q)=\lcs(L_q)=\varepsilon$.
Additionally, we show that the presented algorithm can be adjusted to
test if a state is quasi-periodic in polynomial time.

As quasi-periodicity on the left and on the right are symmetric properties 
we only consider quasi-periodic states 
of the form $u w^*$ (quasi-periodic on the left). 
The proofs in the case $w^* u$ are symmetric and therefore 
omitted here. 
In the end of this section we shortly discuss the introduced algorithms for the symmetric case $w^* u$.

To build the earliest form of a quasi-periodic state 
we use the property that each state accessible from a quasi-periodic state is as well quasi-periodic.
However, the periods can be shifted as the following example shows.

\begin{example}\label{ex_qpState}
  Consider states $q$, $q_1$ and $q_2$ with rules 
 $q, f \rightarrow a q_1(x_1) c$, $q_1, f \rightarrow aa q_2(x_1) ab$,
 $q_2, f \rightarrow q_2(x_1) abc$, $q_2, g \rightarrow abc$.
 State $q$ accepts trees of the form $f^n(g)$, $n \geq 2$, and produces the language $aaa(abc)^n$, i.e.\ 
 $q$ is quasi-periodic of period $abc$. State $q_1$ accepts trees of the 
 form $f^n(g)$, $n \geq 1$, and produces the language $aa(abc)^{n}ab$, i.e.\ 
 $q_1$ is quasi-periodic of period $cab$. 
 State $q_2$ accepts trees of the form $f^n(g)$, $n \geq 0$ and produces the language $(abc)^{n+1}$, i.e.\ 
 $q_2$ is (quasi-)periodic of period $abc$. 
\end{example}

We introduce two definitions to measure the shift of periods.
We denote by $\shiftWord un$ the \emph{from right-to-left shifted word of $u$ of shift $n$, $n \leq \size{u}$}, i.e.\ 
$\shiftWord un = u'^{-1}uu'$ where $u'$ is the prefix of $u$ of size $n$. 
If $n \geq |u|$ then $\rho_n[u] = \rho_{m}[u]$ with $m = n \mod |u|$.

For two quasi-periodic states $q_1,\,q_2$ of period $u=u_1u_2$ and $u'=u_2u_1$, respectively, 
we denote the \emph{shift in their period} by $\shiftPeriod{q_1}{q_2}=\size{u_1}$.

The size of the periods of a quasi-periodic state and the 
states accessible from this state can be computed from the size of the shortest words of the languages produced 
by these states.

\begin{lemma}\label{lem_quasi_period}
If $q$ is quasi-periodic on the left with period $w$, and $q'$ accessible from $q$,
then $q'$ is quasi-periodic with period $\varepsilon$ or a shift of $w$. 
Moreover we can calculate the shift $\shiftPeriod{q}{q'}$ in polynomial time.
\end{lemma}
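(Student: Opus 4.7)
The plan is to prove both assertions simultaneously by induction along the accessibility relation, so that the qualitative statement about the period of $q'$ and the explicit shift formula are produced in the same inductive step. The base case $q'=q$ contributes shift $0$. For the inductive step I would take an accessible $p$ that by induction has period $v=\shiftWord{w}{s}$ with $s=\shiftPeriod{q}{p}$; note that $v$ is primitive because $w$ is, which in turn follows from $w$ being minimal in the definition of period. I would then pick any rule $p,f\rightarrow v_0 r_1(x_{\sigma(1)})\dots r_n(x_{\sigma(n)}) v_n$ of $p$ and a child $r=r_j$, and analyze the shift inherited from $p$ to $r$.

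First I would fix an arbitrary tree $t_i\in\dom(r_i)$ for each $i\neq j$ (domains are nonempty by convention) and let $\alpha,\beta$ be the resulting fixed prefix and suffix surrounding the output of $\sem{M}_r$ inside $L_p$, so that $\alpha\,\sem{M}_r(t)\,\beta\in u_p v^{*}$ for every $t\in\dom(r)$. Comparing two such outputs gives $\ell_2=\ell_1\,\beta v^{k_2-k_1}\beta^{-1}$, and writing $v=v_1 v_2$ with $|v_2|=|\beta|\bmod|v|$, the conjugate $\beta v\beta^{-1}$ collapses to $v_2 v_1 = \shiftWord{v}{|v_1|}$. Hence $L_r$ is either a singleton (period $\varepsilon$) or $L_r \subseteq \ell_{\min}\bigl(\shiftWord{v}{|v_1|}\bigr)^{*}$ with $\shiftWord{v}{|v_1|}$ as its period, which is itself a shift of $w$. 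This establishes the qualitative claim together with the shift formula
\[\shiftPeriod{q}{r}=\bigl(s+|v|-(|\beta|\bmod|v|)\bigr)\bmod|w|.\]

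For the polynomial-time bound I would turn this formula into a bottom-up propagation on the accessibility graph rooted at $q$. The number $|\beta|\bmod|w|$ decomposes into the SLP-lengths $|v_i|\bmod|w|$ (computable in polynomial time directly on the SLP representations) together with one representative residue $|\sem{M}_{r_i}(\cdot)|\bmod|w|$ per accessible child state to the right of position $j$. By the induction already carried out, each such $r_i$ is either erasing or quasi-periodic with period of length $|w|$, so all its outputs share a common length modulo $|w|$ and a single residue per state suffices. A BFS through the accessibility relation therefore computes all shifts, with every arithmetic step done in the bit-length of $|w|$.

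The hard part will be guaranteeing that the quantity $\shiftPeriod{q}{r}$ is well defined: the same state $r$ may be reached along several paths or occur in several rules of different parents, each producing a candidate value. Consistency follows from the qualitative claim itself, since whenever $L_r$ contains more than one word its period is an intrinsic invariant of $r$ and is therefore pinned down by $w$ alone. In the singleton case one checks directly from the rule structure that every state reachable below a singleton state is again singleton, so fixing $\shiftPeriod{q}{r}=0$ there by convention preserves a clean invariant throughout the traversal, and the symmetric case of right-quasi-periodicity is treated by mirroring the argument.
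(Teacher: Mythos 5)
Your proposal is correct and follows essentially the same route as the paper: propagate quasi-periodicity from a state to its children by freezing the siblings' outputs, observe that the child's period is the conjugate of the parent's period by the right context (so a shift of $w$), and compute the shifts additively along the accessibility graph from the lengths (mod $\size{w}$) of shortest words represented as SLPs. Your extra care about well-definedness of $\shiftPeriod{q}{q'}$ across multiple paths, and your explicit shift formula, only make the argument tighter than the paper's version.
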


We now use these shifts to build, for a state $q$ in $M$ that is quasi-periodic on the left, a transducer
$M^q$ equivalent to $M$ where each occurrence of $q$ is replaced by 
its equivalent earliest form, i.e.\ a periodic state and the corresponding prefix.

\begin{algo}\label{earliestAlgo}

Let $q$ be a state in $M$ that is quasi-periodic on the left.
$M^q$ starts with the same states, axiom, and rules as $M$.
\begin{itemize}
\item For each state $p$ accessible from $q$,
we add a copy $p^e$ to $M^q$.
\item For each rule $p,f \rightarrow u_0 q_1(x_{\sigma(1)})\ldots q_n(x_{\sigma(n)}) u_n$ in $M$ with 
$p$ accessible from $q$, we add a rule 
 $p^e,f\rightarrow u_p q_1^e(x_{\sigma(1)})q_2^e(x_{\sigma(2)})\ldots q_n^e(x_{\sigma(n)})$ with 
 $u_p = \shiftWord {\lcp(p)^{-1}u_0\lcp(q_1) \dots \lcp(q_n) u_n}{\shiftPeriod{q}{p}}$
in $M^q$.
\item We delete state $q$ in $M^q$ and replace any occurrence of $q(x)$
in a rule or the axiom of $M^q$ by $\lcp(q)q^e(x)$.
\end{itemize}
\end{algo}
Note that $\lcp(p)^{-1}u_0\lcp(q_1) \dots \lcp(q_n) u_n$ is equivalent to deleting 
the prefix of size $\size{\lcp(p)}$ from the word $u_0\lcp(q_1) \dots \lcp(q_n) u_n$.

Intuitively, to build the earliest form of a state $q$ that is quasi-periodic on the left 
we need to push all words and all longest common prefixes of states on the right-hand side of 
a rule of $q$ to the left. Pushing a word to the left through a state needs to shift the 
language produced by this state. 
We explain the algorithm in detail on state $q$ from Example \ref{ex_qpState}.

\begin{example}\label{ex_earliest}

Remember that $q$ produces the language $aaa(abc)^n, n \geq 2$ and $q_1$, $q_2$ accessible 
from $q$ produce languages $aa(abc)^nab, n \geq 1$ and $(abc)^{n+1}, n \geq 0$, respectively.
Therefore $\lcp(q) = aaaabcabc$, $\lcp(q_1) = aaabcab$ and $\lcp(q_2) = abc$. 
We start with state $q$. 
As there is only one rule for $q$ the longest common prefix of $q$ and the longest common prefix of this rule are the same and 
therefore eliminated. 

$\begin{aligned}
q^e, f &\rightarrow \rho_{s(q,q)}[\lcp(q)^{-1} a \lcp(q_1) c]q_1^e(x_1) \\
    &\rightarrow \rho_{s(q,q)}[(aaaabcabc)^{-1} a aaabcab c]q_1^e(x_1) \\
    &\rightarrow q_1^e(x_1)
\end{aligned}$

\noindent As there is only one rule for $q_1$ the argumentation is the same and we get $q_1^e, f \rightarrow q_2^e$.
For the rule $q_2, f$ we calculate the longest common prefix of the right-hand side $\lcp(q_2) abc = abcabc$ 
that is larger than the longest common prefix of $q_2$.
Therefore we need to calculate the shift $s(q, q_2) = s(q, q_1) + s(q_1, q_2) = |c| + |ab| = 3$ as 
$q_1$ is accessible from $q$ in rule $q, f$ and $q_2$ is accessible from $q_1$ in rule $q_1, f$. 
This leads to the following rule.

$\begin{aligned}
 q_2^e, f &\rightarrow \rho_{s(q, q_2)}[\lcp(q_2)^{-1} \lcp(q_2) abc] q_2^e(x_1) \\
      &\rightarrow \rho_{3}[(abc)^{-1} abcabc] q_2^e(x_1) \\
      &\rightarrow abc q_2^e(x_1)
\end{aligned}$

\noindent As the longest common prefix of $q_2$ is the same as the longest common prefix of the 
right-hand side of rule $q_2, g$ we get $q_2^e, g \rightarrow \varepsilon$.
The axiom of $M^q$ is $\lcp(q) q^e(x_1) = aaaabcabc q^e(x_1)$.

\end{example}

\begin{lemma}\label{lem_partial_earliest}
 Let $M$ be an \ltw and $q$ be a state in $M$ that is quasi-periodic on the left. 
 Let $M^q$ be constructed by Algorithm~\ref{earliestAlgo} and $p^e$ be a state in $M^q$ 
  accessible from $q^e$.
 Then $M$ and $M^q$ are equivalent and $p^e$ is earliest.
\end{lemma}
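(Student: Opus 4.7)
The plan is to prove, in a single structural induction on input trees, both halves of the lemma. The key inductive claim is: for every state $p$ accessible from $q$ and every $t\in\dom(p)$,
\[\sem{M^q}_{p^e}(t)\;=\;\rho_{s(q,p)}\!\left[\lcp(p)^{-1}\sem{M}_p(t)\right].\]
Call this identity $(\star)$. By Lemma~\ref{lem_quasi_period}, every such $p$ is quasi-periodic on the left with period $w_p=\rho_{s(q,p)}[w]$, where $w$ is the period of $q$, so both sides of $(\star)$ lie in $w^*$. Applied with $p=q$ (so $s(q,q)=0$), the identity becomes $\sem{M^q}_{q^e}(t)=\lcp(q)^{-1}\sem{M}_q(t)$; together with the replacement $q(x)\mapsto\lcp(q)q^e(x)$ in the axiom, this yields $\sem{M^q}(t)=\sem{M}(t)$. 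For any remaining live state $p'\neq q$ of $M^q$, its rules are unchanged apart from the same $q(x)\mapsto\lcp(q)q^e(x)$ substitution, so equivalence propagates to it through the same induction once $(\star)$ is in hand.

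The inductive step from a rule $p,f\rightarrow u_0 q_1(x_{\sigma(1)})\ldots q_n(x_{\sigma(n)})u_n$ to its $M^q$-counterpart $p^e,f\rightarrow u_p q_1^e(x_{\sigma(1)})\ldots q_n^e(x_{\sigma(n)})$ is the technical core. Writing $\sem{M}_{q_i}(t_{\sigma(i)})=\lcp(q_i)w_{q_i}^{k_i}$, the hypothesis $(\star)$ gives $\sem{M^q}_{q_i^e}(t_{\sigma(i)})=w^{k_i}$. Specialising the rule to the minimum-length choice of subtrees (all $k_i=0$) forces the structural word $\lcp(p)^{-1}u_0\lcp(q_1)u_1\ldots\lcp(q_n)u_n$ to be a power $w_p^{K_0}$, so $u_p=\rho_{s(q,p)}[w_p^{K_0}]=w^{K_0}$ lies in $w^*$. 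The main obstacle is then to verify that plugging in arbitrary $k_i$'s produces exactly $w^{K_0+k_1+\cdots+k_n}$ on both sides of $(\star)$: on the $M^q$-side this is immediate from $u_p\cdot w^{k_1}\cdots w^{k_n}$, but on the $M$-side one must push each interspersed factor $w_{q_i}^{k_i}$ past the following $u_i\lcp(q_{i+1})\ldots$ by conjugation, rotating its period accordingly. The required shifts accumulate along the path from $q$ to $q_i$ in the additive manner of $\shiftPeriod{q}{q_i}$ supplied by Lemma~\ref{lem_quasi_period}, and the identity $\rho_n[u^k]=\rho_n[u]^k$ ensures that these accumulated rotations are compatible with the single outer rotation by $s(q,p)$ applied in the definition of $u_p$.

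Once $(\star)$ is established, the earliest property follows directly. From $(\star)$ we have $L_{p^e}=\{\rho_{s(q,p)}[\lcp(p)^{-1}v]:v\in L_p\}\subseteq w^*$, and because $\lcp(p)$ is the shortest word of the quasi-periodic language $L_p$ and hence lies in $L_p$, we obtain $\varepsilon\in L_{p^e}$, whence $\lcp(L_{p^e})=\varepsilon$. Any other element of $L_{p^e}$ is a nontrivial power $w^m$ whose longest common suffix with $\varepsilon\in L_{p^e}$ is $\varepsilon$, so $\lcs(L_{p^e})=\varepsilon$ as well, and $p^e$ is earliest.
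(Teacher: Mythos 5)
Your proof follows essentially the same route as the paper's: both establish the key identity $\sem{M^q}_{p^e}(t)=\rho_{s(q,p)}\left[\lcp(p)^{-1}\sem{M}_{p}(t)\right]$ by induction on input trees, and both reduce equality of these two periodic words to matching their period and their length --- your explicit exponent bookkeeping ($w^{K_0+k_1+\cdots+k_n}$ on each side) is just the paper's two separate inductions, on word size and on membership in $w^*$, rolled into one. The only addition worth noting is that you spell out why $p^e$ is earliest (from $\varepsilon\in L_{p^e}\subseteq w^*$), a step the paper's written proof leaves implicit.
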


To replace all quasi-periodic states by their equivalent earliest form
we need to know which states are quasi-periodic.
Algorithm~\ref{earliestAlgo} can be modified to test an arbitrary state for 
quasi-periodicity on the left in polynomial time. 
The only difference to
Algorithm~\ref{earliestAlgo} is that we do not know how to compute 
$\lcp(p)$ in polynomial time and $\shiftPeriod{q}{p}$ does not exist.
We therefore substitute $\lcp(p)$ by some smallest word of $L_{p}$
and we define a mock-shift $s'(q,p)$ as follows
\begin{compactitem}
\item $\pseudoShift{q}{q} = 0$ for all $q$,
\item if \ruleltw, we say $\pseudoShift{q}{q_i}=\size{u_i w_{q_{i+1}} \dots w_{q_{n}} u_n}$, 
where $w_q$ is a shortest word of $L_q$,
\item if $\pseudoShift{q_1}{q_2} = n$ and $\pseudoShift{q_2}{q_3}=m$
then $\pseudoShift{q_1}{q_3}=n+m$.
\end{compactitem}
If several definitions of $\pseudoShift{q}{p}$ exist, we use the smallest.
If $p$ is accessible from a quasi-periodic $q$, then
$\pseudoShift{q}{p}=\shiftPeriod{q}{p}$.

\begin{algo}\label{algo_testQP}
Let $M=(\Sigma, \Delta, Q,\,\ax,\,\rul)$ be an \ltw and $q$ be a state in $M$. 
We build an \ltw $T^{q}$ as follows.
\begin{compactitem}
\item For each state $p$ accessible from $q$,
we add a copy $p^e$ to $T^q$.
\item The axiom is $w_q q^e(x)$ where $w_q$ is a shortest word of $L_q$.
\item For each rule $p,f \rightarrow u_0 q_1(x_{\sigma(1)})\ldots q_n(x_{\sigma(n)}) u_n$ in $M$ with 
$p$ accessible from $q$, we add a rule 
$$p^e,f\rightarrow u_{p}
q_1^e(x_{\sigma(1)})q_2^e(x_{\sigma(2)})\ldots q_n^e(x_{\sigma(n)})$$
in $T^q$, where $u_{p}$ is constructed as follows.
\begin{compactitem}
\item We define $u=u_0w_1 \dots w_n u_n$, where $w_i$ is a shortest word of $L_{q_i}$.
\item Then we remove from $u$ its prefix of size $\size{w'}$, where
$w'$ is a shortest word of $L_{p}$. We obtain a word $u'$.
\item Finally, we set $u_{p} = \rho_{\pseudoShift{q}{p}}[u']$.\end{compactitem}
\end{compactitem}
\end{algo}

As the construction of Algorithms~\ref{earliestAlgo} and~\ref{algo_testQP} are the same if 
the state $q$ is quasi-periodic, $\sem{M}_q$ and $\sem{T^q}$ are equivalent if $q$ is quasi-periodic. 
Moreover, $q$ is quasi-periodic if $\sem{M}_q$ and $\sem{T^q}$ are equivalent.

\begin{lemma}\label{lem_proofAlgoQP}
Let $q$ be a state of an \ltw $M$ and $T^q$ be constructed by Algorithm~\ref{algo_testQP}. 
Then $M$ and $T^q$ are same-ordered and $q$ is quasi-periodic on the left if and only if
$\sem {M}_q=\sem {T^q}$ and $q^e$ is periodic.
\end{lemma}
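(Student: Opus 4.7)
The plan is to separate the lemma into three pieces --- same-orderedness, the easy ``if'' direction, and the harder ``only if'' direction --- and to handle the first two by direct inspection of Algorithm~\ref{algo_testQP}, reserving the real work for the third.

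Same-orderedness follows immediately: the states accessible from $q$ in $M$ and the new states $p^e$ of $T^q$ are in bijection, and each rule $p^e, f \rightarrow \dots$ in $T^q$ copies the permutation $\sigma$ of the corresponding rule $p, f \rightarrow \dots$ in $M$. For the ``if'' direction, I would assume $\sem{M}_q = \sem{T^q}$ and $q^e$ periodic of period $w$, so $L_{q^e} \subseteq w^*$; because the axiom of $T^q$ is $w_q q^e(x)$, we obtain $L_q = L_{T^q} \subseteq w_q w^*$, i.e.\ $q$ is quasi-periodic on the left.

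The core of the proof is the ``only if'' direction. Assuming $q$ is quasi-periodic on the left, Lemma~\ref{lem_quasi_period} gives that every state $p$ accessible from $q$ is quasi-periodic with period $\varepsilon$ or a rotation of the period of $q$, and with a well-defined shift $\shiftPeriod{q}{p}$. The critical auxiliary observation is that for a quasi-periodic language of the form $L_p = u_p \hat{w}_p^*$ the unique shortest word is $u_p = \lcp(L_p)$, so the ``shortest word $w'$ of $L_p$'' chosen by Algorithm~\ref{algo_testQP} coincides with the $\lcp(L_p)$ used by Algorithm~\ref{earliestAlgo}; combined with the identity $\pseudoShift{q}{p} = \shiftPeriod{q}{p}$ noted just above the algorithm, this forces the right-hand sides and the axioms produced by the two algorithms to match letter-for-letter on the accessible part. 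Then Lemma~\ref{lem_partial_earliest} delivers $\sem{M}_q = \sem{T^q}$ and earliness of every $p^e$; since $q^e$ is itself quasi-periodic (being accessible from itself), earliest plus quasi-periodic forces $q^e$ to be periodic.

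The main obstacle is this letter-for-letter match, which I would establish by induction on the accessibility distance from $q$. The base case is $p = q$, where both shifts are $0$ and $\lcp(L_q) = w_q$. The inductive step traverses a rule $p, f \rightarrow u_0 q_1(x_{\sigma(1)}) \dots q_n(x_{\sigma(n)}) u_n$ and verifies that $\shiftPeriod{q}{q_i}$ and $\pseudoShift{q}{q_i}$ agree and that $\lcp(L_{q_i}) = w_{q_i}$ still holds. Once these identifications are secured, the construction of Algorithm~\ref{algo_testQP} reduces on the accessible part to that of Algorithm~\ref{earliestAlgo}, and the correctness of the latter (Lemma~\ref{lem_partial_earliest}) transfers without further work.
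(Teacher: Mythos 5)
Your proposal is correct and follows essentially the same route as the paper: same-orderedness by construction, the ``if'' direction from $\sem{T^q}=w_q\sem{T^q}_{q^e}$ with $L_{q^e}\subseteq w^*$, and the ``only if'' direction by showing that on the part accessible from $q$ Algorithm~\ref{algo_testQP} coincides with Algorithm~\ref{earliestAlgo} and then invoking Lemma~\ref{lem_partial_earliest}. The paper states this last coincidence in one sentence, relying on the facts (already recorded in the preliminaries and just before Algorithm~\ref{algo_testQP}) that the shortest word of a left-quasi-periodic language is its $\lcp$ and that $\pseudoShift{q}{p}=\shiftPeriod{q}{p}$; your induction merely makes that explicit.
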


As $M$ and $T^q$ are same-ordered we can test the equivalence in polynomial time, cf. Theorem~\ref{thm_eqSameOrder}.
Moreover testing a CFG for periodicity is in polynomial time and therefore testing 
a state for quasi-periodicity is in polynomial time.

Algorithm~\ref{algo_testQP} can be applied to a part $q(x) u$ of a rule to test 
$L_{q} u$ for quasi-periodicity on the left. In this case  
for each rule 
\ruleltw\ a rule $\hat{q}, f \rightarrow u_0 q_1(x_{\sigma(1)}) \dots q_n(x_{\sigma(n)}) u_n u$ is 
added to $M$ and 
each occurrence of the part $q(x) u$ 
in a rule of $M$ is replaced by $\hat{q}(x)$. We then apply the above algorithm to $\hat{q}$ 
and test $\sem{M}_{\hat{q}}$ and $\sem{T^{\hat{q}}}$ for equivalence and $\hat{q}^e$ for 
periodicity. 
 \begin{example}   Let $q$ be a state with the rules 
  $q, f \rightarrow bca q(x_1)$, $q, g \rightarrow \varepsilon$.
  Thus, $q$ transforms trees of the form $f^n(g)$, $n \geq 0$ to $(bca)^n$. 
  We use Algorithm \ref{algo_testQP} to test $L_{q} bc$ for quasi-periodicity on the left. 
  As explained above we introduce a state $\hat{q}$ with the rules
  $\hat{q}, f \rightarrow bca \hat{q}(x_1)$, $\hat{q}, g \rightarrow bc$. 
  We now apply Algorithm \ref{algo_testQP} on $\hat{q}$. 
  We build $T^{\hat{q}} = \{\{f,g\}, \{a,b,c\}, \{\hat{q}^e\}, ax, \delta\}$ as follows.
  The axiom $ax$ is $bc \hat{q}^e(x_0)$ as the shortest word of $L_{\hat{q}}$ is $bc$. 
  For the rule $\hat{q}, f$ we build $u = bcabc$ as $bc$ is the shortest 
  word of $\hat{q}$. Then we obtain $u' = abc$ and 
  $u_{\hat{q}} = \rho_{s'(\hat{q}, \hat{q})}[abc] = abc$.
  Thus we get $\hat{q}^e, f \rightarrow abc \hat{q}^e(x_1)$. 
  For the rule $\hat{q}, g$ we build $u = bc$ and obtain $u' = \varepsilon$ as 
  the shortest word of $\hat{q}$ is $bc$. Thus we get 
  $\hat{q}^e, g \rightarrow \varepsilon$. 
  
  $T^{\hat{q}}$ transforms trees of the form 
  $f^n(g)$ to $bc(abc)^n$ and $\hat{q}$ transforms trees of the form 
  $f^n(g)$ to $(bca)^n bc$. Thus, they are equivalent.
  Additionally $\hat{q}^e$ is periodic with period $abc$. 
  It follows that $L_{q_1} bc$ is quasi-periodic.
 \end{example}

We introduced algorithms to test states for quasi-periodicity on the left and to 
build the earliest form for such states. These two algorithms can be adapted for 
states that are quasi-periodic on the right. There are two main differences.
First, as the handle is on the right the shortest word of a language $L$ 
that is quasi-periodic on the right 
is $\lcs(L)$. Second, instead of pushing words through a periodic language to the left 
we need to push words through 
a periodic language to the right. 

Hence, we can test each state $q$ of an \ltw $M$ for quasi-periodicity 
on the left and right. 
If the state is quasi-periodic we replace $q$ by its earliest form.
Algorithm~\ref{earliestAlgo} and~\ref{algo_testQP} run in polynomial time 
if SLPs are used. This is crucial as the shortest word 
of a CFG can be of exponential size, cf. Example~\ref{ex_expSize}.
However, the operations that are needed in the algorithms, 
namely constructing the shortest word of a CFG and 
removing the prefix or suffix of a word, are in polynomial time using SLPs, cf.~\cite{Lohrey2014}.

\begin{theorem}\label{lem_noMoreQuasi}
Let $M$ be an \ltw. Then an equivalent \ltw $M'$ where 
all quasi-periodic states are earliest can be 
constructed in polynomial time.
\end{theorem}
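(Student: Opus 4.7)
The plan is to iteratively eliminate the non-earliest quasi-periodic states using Algorithm~\ref{earliestAlgo}. First, I would use Lemma~\ref{lem_proofAlgoQP} together with its right-symmetric analogue (obtained by mirroring Algorithm~\ref{algo_testQP}) to compute in polynomial time the set $Q_{qp} \subseteq Q$ of all states of $M$ whose language is quasi-periodic, together with the side on which each is quasi-periodic. Each individual test reduces to an equivalence check of same-ordered \ltws (Theorem~\ref{thm_eqSameOrder}) together with a periodicity check of a CFG, both polynomial.

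Next, I would process the states of $Q_{qp}$ one at a time. For $q \in Q_{qp}$ I would apply Algorithm~\ref{earliestAlgo} if $q$ is quasi-periodic on the left, or its symmetric variant if $q$ is quasi-periodic on the right. By Lemma~\ref{lem_partial_earliest}, this yields an equivalent transducer in which $q$ is removed and fresh earliest copies $q^e$ and $p^e$ are introduced for every state $p$ accessible from $q$. The essential observation for termination is that Algorithm~\ref{earliestAlgo} only adds rules for the fresh copies $p^e$, substitutes $\lcp(q)q^e(x)$ for each occurrence of $q(x)$, and deletes $q$; every other original state keeps its rules, and hence its language, unchanged. Consequently the quasi-periodicity status and the earliest status of every state distinct from $q$ are unaffected, the fresh copies never need to be revisited, and the loop terminates after at most $|Q|$ iterations. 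In the resulting \ltw, every state that is quasi-periodic is either an untouched original that was not in $Q_{qp}$ (hence not quasi-periodic) or a fresh copy $p^e$ which is earliest by Lemma~\ref{lem_partial_earliest}.

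For the running time, each application of Algorithm~\ref{earliestAlgo} adds at most $|Q|$ new states and rules, yielding $|M'| = O(|Q|^2)$. The main obstacle, and the reason SLP compression is indispensable, is that the auxiliary words used by Algorithms~\ref{earliestAlgo} and~\ref{algo_testQP}, namely $\lcp(p)$ and shortest words of $L_p$, can have exponential length (Example~\ref{ex_expSize}); however, every operation the algorithms require, namely shortest-word extraction from a deterministic CFG, prefix and suffix removal, and the cyclic shift $\rho_n$, is computable in polynomial time on SLP-compressed words by~\cite{Lohrey2014}. Combining this with the polynomial bound on the number of iterations gives $M'$ in polynomial time in $|M|$.
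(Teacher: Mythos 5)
Your overall strategy is the same as the paper's: identify the quasi-periodic states (via Algorithm~\ref{algo_testQP} and its right-symmetric variant), then iteratively replace each non-earliest one by its earliest form using Algorithm~\ref{earliestAlgo}, invoking Lemma~\ref{lem_partial_earliest} both for correctness and for the fact that the fresh copies $p^e$ are earliest and hence never need reprocessing, with SLP compression carrying the per-step word operations. All of that matches the paper.

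There is, however, a genuine gap in your complexity analysis: the claim that each application of Algorithm~\ref{earliestAlgo} adds at most $|Q|$ new states. The algorithm creates a copy $p^e$ for every state $p$ accessible from $q$ \emph{in the current transducer}, and after earlier iterations this accessible set can contain fresh copies created in those iterations (because occurrences of an already-processed state $q'(x)$ inside rules reachable from $q$ have been rewritten to $\lcp(q')\,q'^e(x)$, making $q'^e$ and everything below it accessible from $q$). If you process a chain of quasi-periodic states bottom-up, each step can copy the copies produced by the previous step, and the state count can double at every iteration — an exponential blow-up over $|Q|$ iterations. The paper closes this hole by fixing the processing order: handle $q$ before $q'$ whenever $q'$ is accessible from $q$ (with a tie-breaking rule inside strongly connected components), so that each invocation only ever copies original states and the total size stays $O(|Q|^2)$. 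Your proof needs this ordering (or an equivalent argument, e.g.\ sharing rather than re-copying already-earliest states) to justify the polynomial bound.
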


\subsection{Switching Periodic States}

In this part we obtain the partial normal form by ordering periodic states 
of an erase-ordered transducer where all quasi-periodic states are earliest. 
Ordering means that if the order of the subtrees in the translation 
can differ, we choose the one similar to the input, i.e.\ 
if $q(x_3)q'(x_1)$ and $q'(x_1)q(x_3)$ are equivalent, we choose the second order.
We already showed how we can build a transducer where each quasi-periodic 
state is earliest and therefore periodic. 
However, we need to make parts of rules earliest such that periodic states 
can be switched as the following example shows.

\begin{example}
 Consider the rule $q,h \rightarrow q_1(x_2) b q_2(x_1)$ where $q_1$, $q_2$ have the rules 
 $q_1, f \rightarrow bcabca q_1(x)$, $q_1, g \rightarrow \varepsilon$, 
 $q_2, f \rightarrow cab q_2(x)$, $q_2, g \rightarrow \varepsilon$. 
 States $q_1$ and $q_2$ are earliest and periodic but not of the same period as 
 a subword is produced in between.
 We replace the non-earliest and quasi-periodic part $q_1(x_2) b$ by their earliest form. This leads to 
 $q, h \rightarrow b q_1^e(x_2) q_2(x_1)$ with 
 $q_1^e, f \rightarrow cabcab q_1^e(x)$, $q_1^e, g \rightarrow \varepsilon$. 
 Hence, $q_1^e$ and $q_2$ are earliest and periodic of the same period and can be switched in the rule.
\end{example}

To build the earliest form of a quasi-periodic part of a rule $q(x) u$ each 
occurrence of this part is replaced by a state $\hat{q}(x)$ and for each rule 
$\ruleltw$ a rule $\hat{q}, f \rightarrow u_0 q_1(x_{\sigma(1)}) \dots q_n(x_{\sigma(n)}) u_n u$ 
is added. Then we apply Algorithm~\ref{earliestAlgo} on $\hat{q}$ to replace $\hat{q}$ and therefore 
$q(x) u$ by their earliest form. Iteratively this leads to the following theorem.

\begin{theorem}\label{earliestPart}
For each \ltw $M$ where all quasi-periodic states are earliest  we can build in polynomial time an equivalent \ltw $M'$ such that each part 
 $q(x) u$ of a rule in $M$ 
 where $L_{q} u$ is quasi-periodic is earliest. 
\end{theorem}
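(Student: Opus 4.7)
The plan is to reduce Theorem~\ref{earliestPart} to Algorithm~\ref{earliestAlgo} via a reification trick. For each candidate part $q(x)u$ occurring in some rule of $M$, I would introduce a fresh state $\hat{q}$, equip it with a rule $\hat{q}, f \rightarrow u_0 q_1(x_{\sigma(1)}) \dots q_n(x_{\sigma(n)}) u_n u$ for each rule $\ruleltw$ of $q$, and replace every occurrence of the literal $q(x)u$ in the rules and axiom of $M$ by $\hat{q}(x)$. By construction $L_{\hat{q}} = L_q u$, so the modified transducer is equivalent to $M$. I would then invoke the variant of Algorithm~\ref{algo_testQP} described right after Lemma~\ref{lem_proofAlgoQP} to decide in polynomial time whether $\hat{q}$ is quasi-periodic on the left or on the right; if it is, I apply Algorithm~\ref{earliestAlgo} to replace $\hat{q}$ by its earliest copy $\hat{q}^e$, absorbing $\lcp(L_{\hat{q}})$ or $\lcs(L_{\hat{q}})$ into the surrounding words. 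After this substitution the corresponding part $q(x)u$ in $M'$ is earliest, by Lemma~\ref{lem_partial_earliest}.

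Next, I would argue polynomiality and non-interference. There are only polynomially many candidate parts in $M$, namely the pairs $q_i(x_{\sigma(i)}) u_i$ occurring in rules $\ruleltw$, so I process them in a single sweep. Each reification and invocation of Algorithm~\ref{earliestAlgo} adds a linear number of fresh states and rules and performs SLP-based word operations (shortest-word extraction, prefix/suffix removal, shifting by $\shiftPeriod{\cdot}{\cdot}$) that are polynomial time by~\cite{Lohrey2014}, so the overall transducer grows only polynomially. Since Algorithm~\ref{earliestAlgo} rewrites only states accessible from $\hat{q}$ and leaves the outer context intact modulo the local substitution $\hat{q}(x) \mapsto \lcp(L_{\hat{q}})\,\hat{q}^e(x)$ (or the symmetric version on the right), processing one part does not destroy the earliest property established for a part processed earlier in the sweep.

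The main obstacle will be verifying this non-interference claim rigorously, and in particular making sure that after all parts have been processed, the hypothesis of the theorem (all quasi-periodic states earliest) is preserved on the fresh states $\hat{q}^e$ introduced along the way. I would rely on Lemma~\ref{lem_partial_earliest}, which guarantees that every state accessible from $\hat{q}^e$ is earliest, together with the observation that the original states of $M$ are touched only at the single position where the substitution takes place, so their languages $L_q$ (already earliest by hypothesis) do not change in a way that would re-introduce a non-trivial $\lcp$ or $\lcs$. A secondary subtlety is that different candidate parts may share the same base state $q$ with distinct suffixes $u,u'$; handling them by independent reifications $\hat{q},\hat{q}'$ keeps the earliest transformations of $L_q u$ and $L_q u'$ separate and avoids interference. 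Collecting the substitutions produced by one sweep then yields the \ltw $M'$ promised by the theorem in polynomial time.
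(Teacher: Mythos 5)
Your overall strategy---reify each part $q(x)u$ as a fresh state $\hat{q}$ with rules $\hat{q}, f \rightarrow u_0 q_1(x_{\sigma(1)}) \dots q_n(x_{\sigma(n)}) u_n u$, test $\hat{q}$ with Algorithm~\ref{algo_testQP}, and make it earliest with Algorithm~\ref{earliestAlgo}---is exactly the paper's. The gap is in your treatment of ``non-interference,'' which you correctly flag as the main obstacle but then resolve incorrectly. You enumerate the candidate parts statically (``the pairs $q_i(x_{\sigma(i)})u_i$ occurring in rules'') and process them ``in a single sweep'' in an unspecified order. But the set of parts that must be made earliest is not static: when you replace $q_{i+1}(x_{\sigma(i+1)})u_{i+1}$ by $\lcp(L_{\hat{q}_{i+1}})\,\hat{q}_{i+1}^e(x_{\sigma(i+1)})$, the word standing between $q_i$ and its right neighbour changes from $u_i$ to $u_i\,\lcp(L_{\hat{q}_{i+1}})$, so the part that now has to be tested at position $i$ is $q_i(x_{\sigma(i)})\,u_i\,\lcp(L_{\hat{q}_{i+1}})$, not the original $q_i(x_{\sigma(i)})u_i$. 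In particular, processing a part can \emph{create} a non-earliest quasi-periodic part immediately to its left (and, if you had already processed position $i$, destroy the earliest property you established there). So the claim that ``processing one part does not destroy the earliest property established for a part processed earlier in the sweep'' is false for a left-to-right or order-agnostic sweep over the original parts.

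The paper's proof resolves this by fixing the direction: within each rule the parts are tested and replaced from right to left, i.e.\ first $q_n(x_{\sigma(n)})u_n$, then $q_{n-1}(x_{\sigma(n-1)})u_{n-1}\lcp(L_{\hat{q}_n})$ if the previous step fired, and so on down to index $1$. Since making a left-quasi-periodic part earliest only pushes a prefix to the \emph{left}, a right-to-left pass never invalidates work already done to its right, and after one such pass every part of the rule is earliest; the bound of at most $n$ algorithm invocations per rule then gives polynomial time. Your argument needs this ordering (and the dynamic updating of the word $u_i$) to go through; without it the conclusion of the theorem is not established. The rest of your proposal---equivalence of the reified transducer, the appeal to Lemma~\ref{lem_partial_earliest}, the SLP-based complexity bounds, and keeping distinct reifications $\hat{q},\hat{q}'$ for the same base state with different suffixes---matches the paper and is fine.
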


In Theorem \ref{thm_periodic} we showed that order differences in 
equivalent erase-ordered \ltws where all quasi-periodic states are earliest 
and all parts of rules $q(x) u$ are earliest are caused by adjacent periodic states.
As these states are periodic of the same period and no words are produced in between 
these states can be reordered without changing the semantics of the \ltws.

\begin{lemma}\label{lem_switch}
 Let $M$ be an \ltw such that 
 \begin{compactitem}
  \item $M$ is erase-ordered,
  \item all quasi-periodic states in $M$ are earliest and   \item each $q_i(x_{\sigma(i)}) u_i$ in a 
  rule of $M$ that is quasi-periodic is earliest.
 \end{compactitem}
 Then we can reorder adjacent periodic states $q_i(x_{\sigma(i)}) q_{i+1}(x_{\sigma(i+1)})$ 
 of the same period in the 
 rules of $M$ such that $\sigma(i) < \sigma(j)$ in polynomial time. 
 The reordering does not change the transformation of $M$.
\end{lemma}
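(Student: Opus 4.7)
The plan is a bubble-sort inside each rule: whenever two adjacent states $q_i,q_{i+1}$ with $u_i=\varepsilon$ are periodic of the same period $w$ and $\sigma(i)>\sigma(i+1)$, exchange them. As a preprocessing step, for every state $q$ of $M$ I would decide whether $L_q$ is periodic and, if so, compute an SLP for its period $w_q$. Under the hypothesis that every quasi-periodic state is earliest, a quasi-periodic decomposition $L_q\subseteq u w^*$ combined with $\lcp(L_q)=\varepsilon$ forces $u=\varepsilon$, so quasi-periodicity and periodicity coincide here; thus Algorithm~\ref{algo_testQP} together with Lemma~\ref{lem_proofAlgoQP} decides periodicity in polynomial time, and the primitive root of a shortest witness word of $L_q$, computable on SLPs in polynomial time~\cite{Lohrey2014}, delivers $w_q$.

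With these SLPs in hand I would traverse every rule $q,f \rightarrow u_0 q_1(x_{\sigma(1)}) \ldots q_n(x_{\sigma(n)}) u_n$ and, for each index $i$ with $u_i=\varepsilon$, test equality of $w_{q_i}$ and $w_{q_{i+1}}$ via Plandowski's polynomial-time SLP equality algorithm~\cite{Plandowski1995}; whenever the test succeeds and $\sigma(i)>\sigma(i+1)$, swap the factors $q_i(x_{\sigma(i)})$ and $q_{i+1}(x_{\sigma(i+1)})$. Iterating the sweep in bubble-sort style terminates after at most $O(n^2)$ swaps per rule. Semantic invariance of a single swap is immediate from the commutativity of $w^*$: on any input tree the two concerned output factors lie in $L_{q_i},L_{q_{i+1}}\subseteq w^*$, hence are both powers of $w$ and commute, while the variables $x_{\sigma(i)},x_{\sigma(i+1)}$ remain attached to their original states, so $\sem{M}$ is preserved by every swap, and by induction by the entire procedure.

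The main obstacle is the SLP arithmetic just invoked --- extracting shortest words of CFG languages, taking primitive roots, and comparing two SLP-compressed periods for equality --- but all of these operations are known to be polynomial~\cite{Lohrey2014,Plandowski1995}, so the whole construction runs in polynomial time in $|M|$.
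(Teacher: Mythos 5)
Your proposal is correct and takes essentially the same route as the paper, which justifies this lemma only by the remark preceding it: adjacent periodic states of the same period with no word between them emit commuting powers of that period, so swapping them preserves the semantics. The algorithmic details you add --- identifying periodic states via the quasi-periodicity test (periodicity and quasi-periodicity coincide for earliest states), extracting the period as the primitive root of a shortest word, and comparing SLP-compressed periods --- are sound and polynomial; the only nitpick is that you should take a shortest \emph{non-empty} word of $L_q$ when $\varepsilon \in L_q$, since the primitive root of $\varepsilon$ does not yield the period.
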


We showed before how to construct a transducer with the preconditions needed in 
Lemma~\ref{lem_switch} in polynomial time. Note that replacing a 
quasi-periodic state by its earliest form can break the erase-ordered property. 
Thus we need to replace all quasi-periodic states by its earliest form \emph{before} 
building the erase-ordered form of a transducer.
Then Lemma~\ref{lem_switch} is the last step to obtain the partial normal form for an \ltw.

\begin{theorem}\label{thm_normalForm}
 For each \ltw we can construct an equivalent \ltw 
 that is in partial normal form in polynomial time.
\end{theorem}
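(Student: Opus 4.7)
The plan is to build the partial normal form by composing the polynomial-time constructions developed in the previous subsections in a carefully chosen order. Starting from an arbitrary \ltw $M$, I would produce transducers $M_1, M_2, M_3, M_4$ equivalent to $M$ such that $M_4$ satisfies all three conditions of Definition~\ref{def_partialnormalform}.

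First, apply Theorem~\ref{lem_noMoreQuasi} to obtain $M_1$ in which every quasi-periodic state is earliest, establishing condition~1. Next, apply Lemma~\ref{lem_eraseOrdered} to $M_1$ to get an erase-ordered $M_2$; since that lemma preserves the earliest property on each state, condition~1 survives and condition~2 is now in force. Third, apply Theorem~\ref{earliestPart}, whose hypothesis (all quasi-periodic states earliest) is satisfied by $M_2$, to get $M_3$ in which additionally every part $q(x)u$ with $L_q u$ quasi-periodic is earliest. Finally, apply Lemma~\ref{lem_switch} to $M_3$, which now satisfies all three preconditions of that lemma, to swap adjacent periodic states of the same period whose input indices are out of order, yielding $M_4$ with $\sigma(i) < \sigma(i+1)$ whenever $L_{q_i} u_i L_{q_{i+1}}$ is quasi-periodic. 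Every step runs in polynomial time and preserves equivalence with $M$, so the overall construction is polynomial.

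The key subtlety---and the main obstacle---is verifying that each step preserves the invariants established by the earlier ones. The text explicitly flags the interaction between steps one and two: making a quasi-periodic state earliest can reshuffle the right-hand sides of rules in a way that destroys erase-ordering, which is why Theorem~\ref{lem_noMoreQuasi} must precede Lemma~\ref{lem_eraseOrdered}. Analogously, one has to check that Theorem~\ref{earliestPart} preserves both condition~1 (it only pushes common prefixes across pairs, leaving each individual $L_q$ essentially untouched up to an outer shift that is itself earliest) and condition~2 (the push does not alter which states are erasing nor their relative positions inside a rule), and that the swaps performed by Lemma~\ref{lem_switch}, which involve only adjacent same-period states with no nonempty word between them, disturb none of conditions 1--3. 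Once these preservation properties are verified, the theorem is just the composition of the four polynomial-time steps.
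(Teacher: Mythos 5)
Your proposal is correct and matches the paper's own argument, which composes the same four polynomial-time steps in the same order (Theorem~\ref{lem_noMoreQuasi}, then Lemma~\ref{lem_eraseOrdered}, then Theorem~\ref{earliestPart}, then Lemma~\ref{lem_switch}) and likewise flags that quasi-periodic states must be made earliest \emph{before} establishing erase-order. Your additional attention to why the later steps preserve the earlier invariants is a reasonable elaboration of what the paper leaves implicit.
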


 \subsection{Testing Equivalence in Polynomial Time}
 
 It remains to show that the equivalence problem of \ltws in partial normal form is decidable in 
 polynomial time.
 The key idea is that two equivalent \ltws in partial normal form are same-ordered. 
 
 Consider two equivalent \ltws $M_1$, $M_2$ where all quasi-periodic states and all parts 
 of rules $q(x) u$ with $L_{q}u$ is quasi-periodic are earliest. In Theorem~\ref{thm_periodic} 
 we showed if the orders $\sigma_1$, $\sigma_2$ of two co-reachable states $q_1$, $q_2$ of 
 $M_1$, $M_2$, respectively, for the same input differ then the states causing this order differences 
 are periodic with the same period. The partial normal form solves this order differences such that 
 the transducers are same-ordered.

\begin{lemma}\label{lem_subtreeorder}
 If $M$ and $M'$ are equivalent and in partial normal form then they are same-ordered.
\end{lemma}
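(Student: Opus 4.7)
The plan is to argue by contradiction: assume $M$ and $M'$ are equivalent and in partial normal form but not same-ordered. Then there is a pair of co-reachable states $q \in M$, $q' \in M'$ whose rules on some common symbol $f$ have distinct permutations $\sigma_1 \neq \sigma_2$. Because $\sigma_1,\sigma_2$ are permutations of $\{1,\ldots,n\}$ with $\sum_j \sigma_1^{-1}(j) = \sum_j \sigma_2^{-1}(j)$, a counting argument shows there must exist a subtree index $j$ with $k := \sigma_1^{-1}(j) < \sigma_2^{-1}(j) =: l$, so $\sigma_1(k) = \sigma_2(l) = j$ with $k<l$. This pair is the input to Theorem~\ref{thm_periodic}, which I would invoke after first noting that the partial normal form satisfies its hypotheses (the quasi-periodic-state part is clause (1) of Definition~\ref{def_partialnormalform}, and the corresponding condition on parts $q(x)u$ with $L_q u$ quasi-periodic is an immediate consequence of the construction behind Theorem~\ref{earliestPart} that is folded into the partial normal form via Theorem~\ref{thm_normalForm}).

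Theorem~\ref{thm_periodic} then tells us that all states $q_k,q_{k+1},\ldots,q_l$ on $M$'s right-hand side share a common period $w$ and that $u_k = u_{k+1} = \cdots = u_{l-1} = \varepsilon$. For every consecutive pair we therefore have $L_{q_i} u_i L_{q_{i+1}} = L_{q_i} L_{q_{i+1}} \subseteq w^* w^* = w^*$, which is periodic and hence quasi-periodic, so clause (3) of the partial normal form forces $\sigma_1(i) < \sigma_1(i+1)$. Thus $\sigma_1$ is strictly increasing on $\{k,\ldots,l\}$, and in particular $\sigma_1(l) > \sigma_1(k) = \sigma_2(l)$. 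Hence $l$ is itself a position of disagreement, but of the opposite type $\sigma_1(l) > \sigma_2(l)$.

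From here I would iterate. Set $l_0 := l$ and, given $l_s$ with $\sigma_1(l_s) > \sigma_2(l_s)$, put $l_{s+1} := \sigma_2^{-1}(\sigma_1(l_s))$. If $l_{s+1} < l_s$, the symmetric application of Theorem~\ref{thm_periodic} (with $M$ and $M'$ swapped) combined with clause (3) of partial normal form for $M'$ makes $\sigma_2$ strictly increasing on $\{l_{s+1},\ldots,l_s\}$, whence $\sigma_2(l_s) > \sigma_2(l_{s+1}) = \sigma_1(l_s)$, contradicting $\sigma_1(l_s) > \sigma_2(l_s)$. If instead $l_{s+1} > l_s$, the very same tool applied inside $M$ yields $\sigma_1$ strictly increasing on $\{l_s,\ldots,l_{s+1}\}$, so $\sigma_1(l_{s+1}) > \sigma_1(l_s) = \sigma_2(l_{s+1})$ and the iteration continues with a strictly larger index.

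Since indices are bounded by $n$, the strictly increasing sequence $l_0 < l_1 < \cdots$ cannot stay in the second case forever; once $l_s = n$, the value $l_{s+1} = \sigma_2^{-1}(\sigma_1(n))$ cannot equal $n$ (else $\sigma_1(n) = \sigma_2(n)$), so $l_{s+1} < n = l_s$ and the first case fires, producing the desired contradiction. The main delicacy is not the iteration itself but making sure that the partial normal form really licenses every invocation of Theorem~\ref{thm_periodic} on both the $M$ and $M'$ sides; once this bookkeeping is in place, the proof reduces to the two-case induction above and the lemma follows.
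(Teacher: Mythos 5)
Your proof is correct, and it rests on exactly the same two pillars as the paper's: Theorem~\ref{thm_periodic} to turn a pair $k<l$ with $\sigma_1(k)=\sigma_2(l)$ into a block of same-period periodic states with empty separating words, and clause~(3) of Definition~\ref{def_partialnormalform} to force $\sigma$ to be strictly increasing across that block. Where you diverge is in the combinatorics that extracts the final contradiction. The paper picks a \emph{minimal} window $[i,j]$ that is permuted within itself and contains a disagreement, observes that such a window must contain an adjacent descent $\sigma(l)>\sigma(l+1)$, and kills that descent with a single application of Theorem~\ref{thm_periodic} plus clause~(3). You instead run an index chase $l_{s+1}=\sigma_2^{-1}(\sigma_1(l_s))$, showing that as long as the chase moves right it preserves the invariant $\sigma_1(l_s)>\sigma_2(l_s)$ (via monotonicity of $\sigma_1$ on $[l_s,l_{s+1}]$), and that the first backward step yields the contradiction via the symmetric application of the theorem to $M'$; termination follows since the indices strictly increase and are bounded by $n$. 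Both arguments are sound; the paper's is shorter (one invocation of the theorem versus potentially many), while yours avoids having to justify the paper's somewhat terse claim that the minimal window contains a pair $s\le l<t$ with $\sigma(s)=\sigma'(t)$. One shared caveat you handle at least as carefully as the paper does: the second hypothesis of Theorem~\ref{thm_periodic} (that every part $q(x)u$ with $L_qu$ quasi-periodic is earliest) is not literally among the three clauses of Definition~\ref{def_partialnormalform}, and you are right that it has to be imported from the construction in Theorem~\ref{earliestPart} as folded into Theorem~\ref{thm_normalForm}.
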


 As the equivalence of same-ordered \ltws is decidable in polynomial time (cf. Theorem~\ref{thm_eqSameOrder}) 
 we conclude the following.
 \begin{corollary}
  The equivalence problem for \ltws in partial normal form is decidable in polynomial time.
 \end{corollary}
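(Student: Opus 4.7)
The plan is to reduce the equivalence problem for \ltws in partial normal form to the equivalence problem for same-ordered \ltws, which has already been shown to be decidable in polynomial time in Theorem~\ref{thm_eqSameOrder}. The bridge between the two is Lemma~\ref{lem_subtreeorder}, which guarantees that any two equivalent \ltws in partial normal form are in fact same-ordered. By contraposition, if we can detect a failure of same-orderedness, then we can immediately reject the pair as non-equivalent; otherwise we fall back on Theorem~\ref{thm_eqSameOrder}.

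Concretely, given two \ltws $M$ and $M'$ in partial normal form, I first compute the set $R \subseteq Q_M \times Q_{M'}$ of co-reachable state pairs. This is a standard product-reachability computation: initialise $R$ with the pair $(q_0, q'_0)$ extracted from the axioms of $M$ and $M'$, then close $R$ under the rule ``if $(p, p') \in R$ and $p$, $p'$ both have rules on some $f \in \AlphIn$ with right-hand sides $u_0 q_1(x_{\sigma_1(1)}) \ldots q_n(x_{\sigma_1(n)}) u_n$ and $v_0 q'_1(x_{\sigma_2(1)}) \ldots q'_n(x_{\sigma_2(n)}) v_n$, then for every input position $j \in \{1, \dots, n\}$, the pair $(q_{\sigma_1^{-1}(j)}, q'_{\sigma_2^{-1}(j)})$ belongs to $R$''. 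Since $|R|$ is bounded by $|Q_M| \cdot |Q_{M'}|$ and each propagation step is polynomial, the fixed point is reached in polynomial time.

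Next, for every pair $(p, p') \in R$ and every $f \in \AlphIn$ on which both $p$ and $p'$ have a rule, I compare the two permutations $\sigma_1$ and $\sigma_2$ appearing on the right-hand sides. If some such comparison fails, then $M$ and $M'$ are not same-ordered, and by the contrapositive of Lemma~\ref{lem_subtreeorder} they cannot be equivalent, so the algorithm outputs \emph{not equivalent}. If instead all comparisons succeed (together with the ``both have / neither has'' check for each pair and each $f$), then $M$ and $M'$ are same-ordered by definition, and Theorem~\ref{thm_eqSameOrder} yields a polynomial-time equivalence test.

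The argument relies on each ingredient already being in place: computing co-reachable pairs is routine, Lemma~\ref{lem_subtreeorder} supplies the key structural fact that justifies the early rejection, and Theorem~\ref{thm_eqSameOrder} handles the residual case. The only potential obstacle is ensuring that the polynomial complexity is preserved when the rules are represented using SLPs, but since the test for same-orderedness only inspects the \emph{permutations} $\sigma_1$ and $\sigma_2$ and not the compressed output words themselves, this check is independent of the SLP sizes, so the overall procedure remains polynomial.
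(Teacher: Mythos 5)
Your proposal is correct and follows essentially the same route as the paper: the corollary is obtained by combining Lemma~\ref{lem_subtreeorder} (equivalent transducers in partial normal form are same-ordered) with Theorem~\ref{thm_eqSameOrder}, rejecting immediately if a same-orderedness check on co-reachable state pairs fails. The explicit product-reachability computation you describe is the routine detail the paper leaves implicit, and your observation that the permutation check is independent of SLP compression is accurate.
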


 To summarize, the following steps run in polynomial time and transform a \ltw $M$ into its partial normal form.
 \begin{compactenum}
  \item Test each state for quasi-periodicity. If it is quasi-periodic replace the state by its earliest form.
  \item Build the equivalent erase-ordered transducer.
  \item Test each part $q_i(x_i) u_i$ in each rule from right to left for 
  quasi-periodicity on the left. If it is quasi-periodic on the left replace the part 
  by its earliest form.
  \item Order adjacent periodic states of the same period according to the input order.
 \end{compactenum}
 This leads to our main theorem.
\begin{theorem}\label{thm_main}
 The equivalence of \ltws is decidable in polynomial time.
\end{theorem}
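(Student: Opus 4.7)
The plan is to assemble the polynomial-time equivalence test out of the three main ingredients developed in this paper: the polynomial-time construction of the partial normal form (Theorem~\ref{thm_normalForm}), the structural fact that equivalent transducers in partial normal form must be same-ordered (Lemma~\ref{lem_subtreeorder}), and the polynomial-time equivalence test for same-ordered \ltws (Theorem~\ref{thm_eqSameOrder}). Given two \ltws $M_1, M_2$ over the same input and output alphabets, I would first apply Theorem~\ref{thm_normalForm} to each of them to obtain equivalent \ltws $\hat M_1, \hat M_2$ in partial normal form in polynomial time. Since the construction preserves semantics, equivalence of $M_1, M_2$ reduces to equivalence of $\hat M_1, \hat M_2$.

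Next I would check whether $\hat M_1$ and $\hat M_2$ are same-ordered. For this I would compute the set of co-reachable state pairs by a standard product construction, starting from the pair of axiom states and propagating through matching rules; this terminates in polynomial time since only pairs in $Q_{\hat M_1} \times Q_{\hat M_2}$ are ever explored. For every co-reachable pair $(p_1, p_2)$ and every input symbol $f$ for which both rules are defined, I would compare the permutations $\sigma_1, \sigma_2$ appearing in the two rules. If a mismatch is found, then by the contrapositive of Lemma~\ref{lem_subtreeorder} the transducers are not equivalent and I output ``not equivalent''. Otherwise $\hat M_1$ and $\hat M_2$ are same-ordered by definition, and I invoke Theorem~\ref{thm_eqSameOrder} to decide equivalence in polynomial time.

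The main obstacle is not in this final assembly, which is conceptually straightforward, but in guaranteeing that every intermediate object stays polynomial in size despite the fact that right-hand sides of rules, longest common prefixes, and shifted words can be exponentially long (cf.\ Example~\ref{ex_expSize}). This is handled throughout by representing output words as SLPs and relying on the polynomial-time SLP algorithms of~\cite{Lohrey2014}, exactly as already done in Algorithms~\ref{earliestAlgo} and~\ref{algo_testQP}; Theorem~\ref{thm_eqSameOrder} continues to hold in this SLP-compressed setting as remarked after Example~\ref{ex_expSize}. Once each of the three ingredients is invoked with SLP-compressed rules, the pipeline \emph{normalize, check same-orderedness, reduce to the same-ordered equivalence test} directly yields the claimed polynomial-time decision procedure.
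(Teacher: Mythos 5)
Your proposal is correct and follows essentially the same route as the paper's own proof: normalize both transducers into partial normal form via Theorem~\ref{thm_normalForm}, test same-orderedness, conclude non-equivalence from Lemma~\ref{lem_subtreeorder} if the orders differ, and otherwise decide equivalence via Theorem~\ref{thm_eqSameOrder}. The extra detail you give on computing co-reachable pairs by a product construction and on keeping everything SLP-compressed is a sensible elaboration of steps the paper leaves implicit, but it does not change the argument.
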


\section{Conclusion}

The equivalence problem for linear tree-to-word transducers can be decided in polynomial time.
To prove this we used a reduction to the equivalence problem
between sequential transducers \cite{Laurence2011},
or more exactly, to an extension of this result to same-ordered transducers.
This reduction hinges on two points.
First, we showed that the only structural differences between two
equivalent earliest linear transducers are
caused by periodic languages which are interchangeable. The structural characteristic of periodic languages 
has been used in the normalization of \stws \cite{Laurence2011}.
Second, we showed that if building a fully earliest transducer is
potentially exponential, our reduction only requires
quasi-periodic states to be earliest, which can be done in polynomial time.
The use of the equivalence problem for morphisms on a CFG \cite{Plandowski1995}
and of properties on straight-line programs \cite{Lohrey2012}
is essential here as it was in \cite{Laurence2011,DBLP:conf/lata/LaurenceLNST14}.
This leads to further research questions, starting with generalization
of this result to all tree-to-words transducers.
Furthermore, is it possible that these techniques can be used to
decrease the complexity of some problems in other classes of
transducer classes, such as top-down tree-to-tree transducers,
where the equivalence problem is known to be between \textsc{Exptime}-Hard
and \textsc{NExptime}?

\bibliographystyle{plain}

\begin{appendix}

\section{Proof of Theorem \ref{thm_periodic}}
\begin{thm} Let $M$ and $M'$ be two equivalent erase-ordered \ltws such that 
 \begin{compactitem}
  \item all quasi-periodic states $q$ are earliest, i.e.\ $\lcp(q) = \lcs(q) = \varepsilon$
  \item for each part $q(x)u$ of a rule where $L_{q} u$ is quasi-periodic, $\lcp(L_{q} u) = \varepsilon$ 
 \end{compactitem}
 Let $q$, $q'$ be two co-reachable states in $M$, $M'$, respectively and
 
 \centerline{$q,f \rightarrow u_0 q_1(x_{\sigma_1(1)}) \dots q_n(x_{\sigma_1(n)}) u_n$  and 
$q',f \rightarrow v_0 q'_1(x_{\sigma_2(1)}) \dots q'_n(x_{\sigma_2(n)}) v_n$}

\noindent be two rules for $q$, $q'$. Then 
  for $k < l$ such that $\sigma_1(k) = \sigma_2(l)$, all $q_i$, $k \leq i \leq l$, are periodic of the same period 
 and all $u_j = \varepsilon$, $k \leq j < l$.
\end{thm}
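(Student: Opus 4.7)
The plan is to reduce the statement to Theorem~\ref{thm_coreachEquiv} by lifting to fully earliest forms. Using Theorem~\ref{thm_earliestLTW} and Lemma~\ref{lem_eraseOrdered}, I construct equivalent, earliest, and erase-ordered transducers $M^e$ and $M'^e$; as recalled just before Theorem~\ref{thm_earliestLTW}, this construction preserves the permutations in each rule, so $(q,q')$ lifts to a co-reachable pair $(q^e,q'^e)$ whose rules on $f$ use the same permutations $\sigma_1,\sigma_2$. Theorem~\ref{thm_coreachEquiv} then yields, for every $k<l$ with $\sigma_1(k)=\sigma_2(l)$, that $q_k^e,\ldots,q_l^e$ are periodic of a common period $w$ and that the intermediate words $u'_j$ in $M^e$'s rule satisfy $u'_j=\varepsilon$ for $k\leq j<l$. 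The task reduces to transporting these two conclusions back to $M$.

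For the state conclusion, I would use the relation $\sem{M^e}_{q_i^e}(t)=\beta_i^{-1}\alpha_i^{-1}\sem{M}_{q_i}(t)\beta_i$ (with symmetric treatment on the $\lcs$ side) recalled just before Theorem~\ref{thm_earliestLTW}, where $\alpha_i$ is the prefix pushed out of $L_{q_i}$ and $\beta_i$ the word pushed through $q_i$ during the earliest construction. Since $q_i^e$ is earliest and periodic, $\varepsilon\in L_{q_i^e}$; hence for every $w^m\in L_{q_i^e}$ the word $\alpha_i\beta_i w^m\beta_i^{-1}$ must be a valid element of $L_{q_i}$. A standard word-combinatorics argument then shows $\beta_i w^m\beta_i^{-1}=(w')^m$ for a cyclic rotation $w'$ of $w$ determined by $|\beta_i|\bmod|w|$, so $L_{q_i}\subseteq\alpha_i(w')^*$ is quasi-periodic. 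The first hypothesis of the theorem forces $q_i$ to be earliest, hence $\alpha_i=\varepsilon$ (and symmetrically on the $\lcs$ side), giving $L_{q_i}\subseteq(w')^*$ periodic. But then $L_{q_i}u_i$ is quasi-periodic, so the second hypothesis forces $\lcp(L_{q_i}u_i)=\varepsilon$, i.e.\ the push-through $\beta_i$ must also be $\varepsilon$; the cyclic rotation collapses, $L_{q_i}=L_{q_i^e}$, and $q_i$ has period exactly $w$.

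With all $q_i$ now periodic of period $w$ in $M$, transporting the inter-state words is straightforward: for each $k\leq j<l$, both $L_{q_j}u_j$ and $L_{q_{j+1}}u_{j+1}$ are quasi-periodic, so by the second hypothesis the push-throughs at $q_j$ and $q_{j+1}$ vanish, and by the first hypothesis the $\lcp$ and $\lcs$ push-outs of $L_{q_j}$ and $L_{q_{j+1}}$ are trivial too. Hence $u_j$ is untouched by the earliest construction, so $u'_j=u_j$, and combining with $u'_j=\varepsilon$ gives $u_j=\varepsilon$. The main obstacle I anticipate is the word-combinatorics step in the middle paragraph---showing cleanly that the conjugation by $\beta_i$ can only produce a cyclic rotation of $w$ and then leveraging the second hypothesis to kill that rotation so that the period is exactly $w$ rather than merely a conjugate. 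The remainder of the argument is careful bookkeeping over the iterative earliest construction, which was designed so that the two hypotheses of the theorem make each relevant operation trivial at the positions of interest.
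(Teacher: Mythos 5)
Your first paragraph and the first half of your second paragraph track the paper's proof exactly: lift to the earliest (same-ordered) transducers, invoke Theorem~\ref{thm_coreachEquiv}, and use the relation $\sem{M^e}_{q_i^e}(t)=\beta_i^{-1}\alpha_i^{-1}\sem{M}_{q_i}(t)\beta_i$ together with the first hypothesis to conclude that each $q_i$ in the block is quasi-periodic, hence earliest, hence periodic. The gap is in how you bring $u'_j=\varepsilon$ back to $u_j=\varepsilon$. The push-through at $q_i$ is \emph{not} $\lcp(L_{q_i}u_i)$: it is computed on the word following $q_i$ after the construction has already processed position $i+1$, i.e.\ on $u_i\,\lcp(L_{q_{i+1}})\,\beta_{i+1}$. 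Matching the two rules gives $u'_j=\beta_j^{-1}\,u_j\,\lcp(L_{q_{j+1}})\,\beta_{j+1}$, so inside the block (where hypothesis~1 kills the $\lcp$/$\lcs$ push-outs) the conclusion $u'_j=\varepsilon$ of Theorem~\ref{thm_coreachEquiv} only yields $u_j\beta_{j+1}=\beta_j$: the push-throughs are forced to be equal to \emph{one another}, not empty. Their common value is set at the right boundary of the block by $u_l$ and by $\lcp(L_{q_{l+1}})$ --- and $q_{l+1}$ need not be quasi-periodic, so hypothesis~1 does not make it earliest and this value can be a nonempty prefix of the period. Consequently ``the push-throughs vanish'', ``$u_j$ is untouched by the earliest construction'' and ``$q_i$ has period exactly $w$'' are all unjustified (and false in general; the common period is only a rotation of $w$), and your derivation of $u_j=\varepsilon$ rests on them. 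Deriving $u_j=\varepsilon$ from $u_j\beta_{j+1}=\beta_j$ is exactly as hard as the original problem.

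The paper avoids this bookkeeping entirely by arguing directly on the languages of $M$. Once each $q_j$ in the block is periodic and earliest, $\varepsilon$ is its shortest word, and the whole block $L_{q_k}u_k\cdots L_{q_l}$ --- being a conjugate of a periodic language in the earliest form --- is quasi-periodic, hence so is each part $L_{q_j}u_j$, and it is quasi-periodic \emph{on the left} with shortest word $u_j$. Since for a left-quasi-periodic language $\lcp$ equals the shortest word, the second hypothesis $\lcp(L_{q_j}u_j)=\varepsilon$ gives $u_j=\varepsilon$ directly; the common period then follows from the block $L_{q_k}\cdots L_{q_l}$ being periodic with empty separators, rather than from each $q_i$ having period exactly $w$. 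Replacing your third paragraph (and the last sentence of your second) with this argument closes the gap.
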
\begin{proof}
Let $M_e$ and $M'_e$ be the equivalent earliest transducer of $M$ and $M'$, respectively, such that 
 $M$ and $M_e$ as well as $M'$ and $M'_e$ are same-ordered (cf. Theorem \ref{thm_earliestLTW}). 
 
 Suppose there exists co-reachable (and thus equivalent) states $q^e$
 and $q'^e$ in $M_e$ and $M'_e$, respectively, with rules
 $$q^e, f \rightarrow v_0 q^e_1(x_{\sigma(1)}) \dots q^e_n(x_{\sigma(n)}) v_n, $$
  $$q'^e, f \rightarrow v'_0 q'^e_1(x_{\sigma'(1)}) \dots q'^e_n(x_{\sigma(n)}) v'_{n}$$
 such that $\sigma \neq \sigma'$. 

Let $i$ be the first index such that $\sigma(i) \neq \sigma'(i)$. 
Following Theorem~\ref{thm_coreachEquiv}, we have $j, j'$ such
that $\sigma'(j') = \sigma(i)$ and $\sigma(j) = \sigma'(i)$ and
all $q^e_l$, $i \leq l \leq j$ are periodic with the same period.
 
 Let $q$ and $q'$ be the states in $M$ and $M'$, respectively, from which the co-reachable states 
 $q^e$ and $q'^e$ were constructed with the earliest construction proposed by \cite{Laurence2011}. 
 From the earliest construction it follows that 
 $q$ and $q'$ are co-reachable.
 Since the construction preserves the rule structure, we have:
$$\ruleltw$$
$$q', f \rightarrow u'_0 q'_1(x_{\sigma'(1)}) \dots q'_n(x_{\sigma(n)}) u'_{n}$$

The earliest construction
gives us that for all $l \in \{1, \dots, n\}$, $\sem {M_e}_{q^e_l}(t)=v^{-1}u^{-1}\sem {M}_{q_l}(t)v$ 
for some $u, v \in \AlphOut^*$.
This means that if $q^e_l$ is periodic, then $q_l$ is quasi periodic in its non-earliest form. 
The same is true for all $q'_l$.

However, the first property we supposed of $M$ and $M'$ implies that
all those $q_l$ and $q'_l$ that are quasi-periodic are not only quasi periodic, but periodic.
Consider a part of the rule $q_i(x_{\sigma(i)}) u_i \dots q_j(x_{\sigma(j)})$ 
that is periodic in the earliest form and therefore quasi-periodic in the non-earliest form. 
The first condition gives us that $q_i, \dots, q_j$ are periodic. However, then the words 
$u_i, \dots, u_{j-1}$ are not necessarily empty. 
As the part $q_i(x_{\sigma(i)}) u_i \dots q_j(x_{\sigma(j)})$ is quasi-periodic 
we know that each part $q_k(x_{\sigma(k)}) u_k$, $i \leq k < j$ is quasi-periodic.
Then the second condition of this theorem guarantees 
that the parts $q_k(x_{\sigma(k)}) u_k$, $i \leq k < j$ are not only quasi-periodic, 
but periodic. From which it follows that 
the words $u_i, \dots, u_{j-1}$ are empty.
As the part $q_i(x_{\sigma(i)}) u_i \dots q_j(x_{\sigma(j)})$ is periodic 
and $u_i, \dots, u_{j-1}$ are empty we get that 
$q_i, \dots, q_j$ are periodic of the same period.
The same holds true for states of a  
part of the rule $q'_i(x_{\sigma'(i)}) u'_i \dots q'_j(x_{\sigma'(j)})$ that is 
periodic in the earliest form.
 \end{proof}

\section{Proof of Lemma \ref{lem_quasi_period}}
\begin{lem} If $q$ is quasi-periodic on the left with period $w$, and $q'$ accessible from $q$,
then $q'$ is quasi-periodic with period $\varepsilon$ or a shift of $w$. 
Moreover we can calculate the shift $\shiftPeriod{q}{q'}$ in polynomial time.
\end{lem}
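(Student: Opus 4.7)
The plan is to proceed by induction on the length of the shortest accessibility path from $q$ to $q'$. For the base case $q = q'$, I would set $\shiftPeriod{q}{q} = 0$ and note that the statement holds trivially since $q$ is quasi-periodic of period $w = \shiftWord{w}{0}$. For the inductive step, I would pick a state $p$ appearing one step earlier on the path together with a rule $p,f \rightarrow u_0 p_1(x_{\sigma(1)}) \dots p_n(x_{\sigma(n)}) u_n$ of $p$ such that $p_k = q'$ for some $k$. By the inductive hypothesis, $p$ is quasi-periodic, say $L_p \subseteq u_p \cdot w_p^*$ with $w_p = \shiftWord{w}{\shiftPeriod{q}{p}}$ (or $w_p = \varepsilon$). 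Since every state has a non-empty domain, one can fix words $v_j \in L_{p_j}$ for each $j \neq k$ and set $\alpha = u_0 v_1 u_1 \cdots v_{k-1} u_{k-1}$ and $\beta = u_k v_{k+1} \cdots v_n u_n$, obtaining $\alpha \cdot L_{q'} \cdot \beta \subseteq L_p \subseteq u_p w_p^*$.

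From this inclusion I would extract the quasi-periodicity of $q'$ as follows. For any two words $v, v' \in L_{q'}$, the equalities $\alpha v \beta = u_p w_p^m$ and $\alpha v' \beta = u_p w_p^{m'}$ together with a Fine-and-Wilf-style comparison force both $v$ and $v'$ to lie in $u_1'' \cdot \shiftWord{w_p}{c}^{\!*} \cdot u_2''$ for a fixed prefix $u_1''$, a fixed suffix $u_2''$, and the offset $c = (|\alpha| - |u_p|) \bmod |w_p|$ of the start of $q'$'s output inside the period of $p$. Hence $L_{q'}$ is quasi-periodic with period either $\varepsilon$ (in the degenerate cases $w_p = \varepsilon$ or $L_{q'}$ a singleton) or the cyclic shift $\shiftWord{w_p}{c}$ of $w$. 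This establishes the structural claim and yields the recurrence
\[
\shiftPeriod{q}{q'} \;\equiv\; \shiftPeriod{q}{p} + |\alpha| - |u_p| \pmod{|w|}.
\]

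For the polynomial-time computation of $\shiftPeriod{q}{q'}$, I would take each witness $v_j \in L_{p_j}$ and the handle $u_p$ of $L_p$ to be a shortest word of the corresponding language. Although these shortest words may have exponential length (cf. Example~\ref{ex_expSize}), each admits an SLP representation of polynomial size, and the length of an SLP-compressed word is computable in polynomial time by a bottom-up evaluation of the grammar (cf.~\cite{Lohrey2014}). Since a shortest accessibility path from $q$ to $q'$ has length at most $|Q|$, iterating the recurrence along that path yields $\shiftPeriod{q}{q'}$ modulo $|w|$ in polynomial time; the period $w$ itself and its length are obtained once, up front, from the quasi-periodicity witness of $q$.

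The hard part will be the Fine-and-Wilf-style combinatorial step: proving that $\alpha L \beta \subseteq u_p w_p^*$ pins $L$ down to a quasi-periodic form with period exactly $\shiftWord{w_p}{(|\alpha|-|u_p|) \bmod |w_p|}$, and handling the degenerate alignments (when $|\alpha| < |u_p|$ or when $L$ is a singleton). The argument must carefully track which suffix of $u_p$ and which cyclic rotation of $w_p$ the subword $v$ actually straddles, in order to read off the correct cyclic shift; once this is in place, the induction and the SLP-based length bookkeeping deliver both parts of the lemma.
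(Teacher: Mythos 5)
Your proposal takes essentially the same route as the paper's proof: both fix shortest witness words for the sibling states of a rule to obtain $\alpha\, L_{q'}\, \beta \subseteq u_p w_p^*$, observe that left and right quotients of a quasi-periodic language are empty, singletons, or quasi-periodic with a conjugated period, and then propagate the shift additively along the accessibility relation using lengths of shortest words of the $L_{q_j}$, computable in polynomial time (the paper does this one quotient at a time rather than via your two-sided $u_1''\,\shiftWord{w_p}{c}^{*}\,u_2''$ form, which keeps the result visibly left-quasi-periodic). The one correction needed is in your recurrence $\shiftPeriod{q}{q'} \equiv \shiftPeriod{q}{p} + |\alpha| - |u_p| \pmod{\size{w}}$: the offset $|\alpha|-|u_p|$ is the phase at which $q'$'s output \emph{starts} inside $p$'s period, whereas the period of $L_{q'}$ written as $h\cdot p^*$ is rotated further by the length of its own handle $h$ (e.g.\ for $L_p = abc(abc)^*$ with rule $p,g \rightarrow ab\, q'(x)$ and $L_{q'} = c(abc)^*$ the shift is $0$ but your formula gives $2-3 \equiv 2 \pmod 3$), which is exactly why the paper's mock-shift $\pseudoShift{q}{q_i}=\size{u_i w_{q_{i+1}} \dots w_{q_n} u_n}$ is built from the \emph{right} context --- a fixable off-by-a-rotation that you yourself flag as the delicate alignment step.
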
\begin{proof}
This is done as an iterative proof with the following elementary
step:
If $\ruleltw$, and $q$ is quasi-periodic on the left with handle $u$
and period $w$,
then for all $i$ between $1$ and $n$, $q_i$ is quasi-periodic
with period $\varepsilon$ or a shift of $w$.

We pick $v_j$ a smallest word
produced by state $q_j$.
We then have that for all $t\in\dom(q_i)$,
$u_0v_1...u_{i-1}\sem M_{q_i}(t)u_{i}...v_nu_n\in L_q$.
If we call $u_l=u_0v_0...u_{i-1}$ and $u_r=u_{i}...v_nu_n$,
we obtain that $L_{q_i}\subseteq u_l^{-1}L_qu_r^{-1}$.
Since $L_{q}\subseteq uw^*$, we can say $L_{q_i}\subseteq u_l^{-1} (uw^*)u_r^{-1}$.
It is a classical result of regular languages that
$(uw^*)u_r^{-1}$ is either empty, a singleton,
or a quasi-periodic language of period $u_rwu_r^{-1}$.
By further removing a prefix to this language the period does not change. 
Hence, we get that
$u_l^{-1}(uw^*)u_r^{-1}$ is also either empty, a singleton,
or a quasi-periodic language of period $u_rwu_r^{-1}$.
This means that $q_i$ is quasi-periodic, of period $\varepsilon$,
or $u_rwu_r^{-1}$, which is a shift of $q$.
The size of $u_r$ can easily be computed from the sizes
of the minimal productions of states $q_j$. 
We build the CFG for $L_{q_j}$.
Then, finding the smallest production of $q_j$
and their size is finding the smallest word of $L_{q_j}$ and
their size, which is a polynomial problem on CFG.

To show that the shifts of the periods can be calculated in polynomial time 
we show that shifts are additive in nature:
If $q_1$ has period $w$, $q_1$ and $q_2$ are of shifted period,
and $q_2$ and $q_3$ are of shifted period, then
$q_1$ and $q_3$ are of shifted period, and
$\shiftPeriod{q_1}{q_3}\equiv\shiftPeriod{q_1}{q_2}+\shiftPeriod{q_2}{q_3}\pmod{\size w}$.

If $q_1$ is of period $w$, then $q_2$ is of period $w_2=w'ww'^{-1}$,
where $w'$ is the suffix of $w$ of size $\shiftPeriod{q_1}{q_2}$.
If $q_2$ is of period $w_2$, then $q_2$ is of period $w_3=w''w_2w''^{-1}$,
where $w''$ is the suffix of $w_2$ of size $\shiftPeriod{q_2}{q_3}$.
We then have that $w_3= (w''w')w(w''w')^{-1}$, where
$(w''w')$ is of size $\shiftPeriod{q_1}{q_2}+\shiftPeriod{q_2}{q_3}$.

We can compute the shift of the period of each state accessible from $q$ rule by rule using
the additive property of the shifts we proved above.
\end{proof}

\section{Proof of Lemma \ref{lem_partial_earliest}}
\begin{lem} Let $M$ be an \ltw and $q$ be a state in $M$ that is quasi-periodic on the left.
 Let $M^q$ be constructed by Algorithm \ref{earliestAlgo} and $p^e$ be a state in $M^q$ 
 accessible from $q^e$.
 Then $M$ and $M^q$ are equivalent and $p^e$ is earliest.
\end{lem}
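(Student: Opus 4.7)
The plan is to establish, by induction on the height of $t$, the structural invariant
\[
\sem{M^q}_{p^e}(t) \;=\; \shiftWord{\lcp(p)^{-1}\sem{M}_p(t)}{\shiftPeriod{q}{p}}
\]
for every state $p$ accessible from $q$ and every $t\in\dom(p)$. Once this is in hand, equivalence of $M$ and $M^q$ is immediate: since $\shiftPeriod{q}{q}=0$, the invariant at $p=q$ reduces to $\sem{M^q}_{q^e}(t)=\lcp(q)^{-1}\sem{M}_q(t)$, so prepending $\lcp(q)$, as the last step of the algorithm prescribes for every occurrence of $q(x)$, recovers $\sem{M}_q(t)$ exactly.

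The base case, where $t$ is a leaf and the rule has $n=0$, is just the definition of $u_p$. For the inductive step I would expand $\sem{M^q}_{p^e}(t)$ using the new rule, apply the induction hypothesis to every child factor $\sem{M^q}_{q_i^e}(t_{\sigma(i)})$, and compare with $\shiftWord{\lcp(p)^{-1}\sem{M}_p(t)}{\shiftPeriod{q}{p}}$ expanded through the original rule for $p$. The algebraic heart of the comparison relies on two facts: first, rotation commutes with powers of a period, i.e.\ $\rho_n[w^k]=(\rho_n[w])^k$ whenever $n\le|w|$; second, by Lemma~\ref{lem_quasi_period} the periods of all states accessible from $q$ are cyclic rotations of a single word $w$, and more precisely $\shiftWord{w_{q_i}}{\shiftPeriod{q}{q_i}}=w$ for every $q_i$. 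Together these facts let me rewrite both $u_p$ and every child contribution as powers of the \emph{same} word $w$, so that both sides of the invariant become a single power of $w$; equality then follows by a simple length comparison.

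Earliestness of each $p^e$ accessible from $q^e$ falls out of the same invariant. Because $\lcp(p)^{-1}\sem{M}_p(t)$ lies in $w_p^{*}$ for every $t\in\dom(p)$, the invariant places $L_{p^e}$ inside $w^{*}$ for the fixed rotated period $w$. Moreover, there is at least one tree $t_p\in\dom(p)$ with $\sem{M}_p(t_p)=\lcp(p)$ (any tree realising the shortest output of $p$), and evaluating the invariant at $t_p$ yields $\sem{M^q}_{p^e}(t_p)=\varepsilon$, hence $\varepsilon\in L_{p^e}$. The presence of $\varepsilon$ in $L_{p^e}$ immediately forces $\lcp(L_{p^e})=\lcs(L_{p^e})=\varepsilon$, which is precisely the earliest condition.

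The main obstacle I anticipate is the bookkeeping in the inductive step, where several rotations interact: the outer rotation by $\shiftPeriod{q}{p}$ introduced by the algorithm inside $u_p$, and the inner rotations by $\shiftPeriod{q}{q_i}$ delivered by the induction hypothesis inside each child factor. Their reconciliation hinges on the additivity of shifts $\shiftPeriod{q}{q_i}\equiv\shiftPeriod{q}{p}+\shiftPeriod{p}{q_i}\pmod{|w|}$ established inside the proof of Lemma~\ref{lem_quasi_period}; once this relation is combined with $\rho_n[w^k]=(\rho_n[w])^k$, the desired identity collapses to a routine length count, and the rest of the argument is syntactic unfolding of the definitions.
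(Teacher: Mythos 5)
Your proposal is correct and follows essentially the same route as the paper: both establish the invariant $\sem{M^q}_{p^e}(t)=\shiftWord{\lcp(p)^{-1}\sem{M}_p(t)}{\shiftPeriod{q}{p}}$ by induction on the input tree and derive it from the two facts that both sides lie in $w^{*}$ for a common rotated period $w$ and have equal length (the paper just runs the length and period arguments as two separate inductions). Your earliestness argument via $\varepsilon\in L_{p^e}$, obtained from a tree realising the shortest output of $p$, is the same conclusion the paper's size identity yields, stated a bit more explicitly.
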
\begin{proof}
To show that $M$ and $M^q$ are equivalent we show that 
$\lcp(q)\sem{M^q}_{q^e}(t) = \sem{M}_q(t)$, for all $t \in \dom(q)$.
To show that $\lcp(q)\sem{M^q}_{q^e}(t) = \sem{M}_q(t)$ 
we show that, for all 
states $p$ accessible from $q$ and all $t \in \dom(p)$,
$\sem {M^q}_{p^e}(t)$ and
$\shiftWord{\lcp(p)^{-1}\sem M_{p}(t)}{\shiftPeriod{q}{p}}$ are equivalent
as then
\begin{align*}
\lcp(q)\sem{M^q}_{q^e}(t) &= \lcp(q) \shiftWord{\lcp(q)^{-1}\sem M_{q}(t)}{\shiftPeriod{q}{q}} \\
&= \lcp(q) \lcp(q)^{-1} \sem M_{q}(t) \\
&= \sem{M}_q(t).
\end{align*}
To show that $\sem {M^q}_{p^e}(t)$ =
$\shiftWord{\lcp(p)^{-1}\sem M_{p}(t)}{\shiftPeriod{q}{p}}$
for all $p$ accessible from $q$ and all $t = (s_1, \dots, s_n) \in \dom(p)$,
we prove that 
$\sem {M^q}_{p^e}(t)$ is of the same period and of the same size as
$\shiftWord{\lcp(p)^{-1}\sem M_{p}(t)}{\shiftPeriod{q}{p}}$.
From Lemma \ref{lem_quasi_period} we know that all $p$ accessible from $q$ are 
quasi-periodic and therefore $\lcp(p)^{-1}\sem M_{p}(t)$ is periodic. 
Hence, if $\sem {M^q}_{p^e}(t)$ and 
$\shiftWord{\lcp(p)^{-1}\sem M_{p}(t)}{\shiftPeriod{q}{p}}$ are 
of the same period and of the same size then they are equivalent.

To show that $\sem {M^q}_{p^e}(t)$ and $\shiftWord{\lcp(p)^{-1}\sem M_{p}(t)}{\shiftPeriod{q}{p}}$ 
have the same size, for all $t \in \dom(p)$,  we show that 
$\size{\sem {M^q}_{p^e}(t)}=\size{\sem M_{p}(t)} - \size{\lcp(p)}.$
The proof is by induction on the input tree. 
For an input tree $t$ with no subtrees we have 
\begin{align*}
\size{\sem {M^q}_{p^e}(t)} &= \size{\rho_{\shiftPeriod{q}{p}}[\lcp(p)^{-1}u_0 \dots u_n]} \\
  &= \size{u_0} + \dots + \size{u_n} - \size{\lcp(p)} \\
  &= \size{\sem M_{p}(t)} - \size{\lcp(p)} \\
  &= \size{\shiftWord{\lcp(p)^{-1}\sem M_{p}(t)}{\shiftPeriod{q}{p}}}.
\end{align*}
Thus, the base casef holds.
Consider an input tree $t = f(s_1, \dots, s_n) \in \dom(p)$.
Then $\sem {M^q}_{p^e}(t)$ is of size
$\size{u_p} + \size{\sem {M^q}_{q_1^e}(s_{\sigma(1)})} + \dots + \size{\sem {M^q}_{q_n^e}(s_{\sigma(n)})}$
with $\size{u_p} = \size{\rho_{\shiftPeriod{q}{p}}[\lcp(p)^{-1}u_0\lcp(q_1) \dots \lcp(q_n) u_n]}$.
Since shifting a word preserves its length, we have
$\size{u_p} = \size{u_0}+\size{\lcp(q_1)}+ \dots + \size{u_n} - \size{\lcp(p)}.$
Thus, we have to show that
$$\size{u_p} + \size{\sem {M^q}_{q_1^e}(s_{\sigma(1)})} + \dots + \size{\sem {M^q}_{q_n^e}(s_{\sigma(n)})} = \size{\shiftWord{\lcp(p)^{-1}\sem M_{p}(t)}{\shiftPeriod{q}{p}}}.$$
By induction we have $\size{\sem {M^q}_{q_i^e}(s_{\sigma(i)})}
=\size{\sem M_{q_i}(s_{\sigma(i)})} - \size{\lcp(q_i)}$.
Thus, we have 
\begin{align*}
 \size{u_p} + &\size{\sem {M^q}_{q_1^e}(s_{\sigma(1)})} + \dots + \size{\sem {M^q}_{q_n^e}(s_{\sigma(n)})} \\
 &= \size{u_0}+\size{\lcp(q_1)}+ \dots + \size{\lcp(q_n)} + \size{u_n} - \size{\lcp(p)} \\
 &\phantom{\null = \size{u_0}}+ \size{\sem M_{q_1}(s_{\sigma(1)})} - \size{\lcp(q_1)} + \dots + \size{\sem M_{q_n}(s_{\sigma(n)})} - \size{\lcp(q_n)} \\
 &= \size{u_0}+\size{\sem M_{q_1}(s_{\sigma(1)})} + \dots + \size{\sem M_{q_n}(s_{\sigma(n)})} + \size{u_n} - \size{\lcp(p)} \\
 &= \size{\sem M_{p}(t)} - \size{\lcp(p)} \\
  &= \size{\shiftWord{\lcp(p)^{-1}\sem M_{p}(t)}{\shiftPeriod{q}{p}}}.
\end{align*}

To show that $\sem {M^q}_{p^e}(t)$ and $\shiftWord{\lcp(p)^{-1}\sem M_{p}(t)}{\shiftPeriod{q}{p}}$ 
have the same period, for all $t \in \dom(p)$, we show that 
$\sem {M^q}_{p^e}(t)\in u^*$ and 
$\shiftWord{\lcp(p)^{-1}\sem M_{p}(t)}{\shiftPeriod{q}{p}} \in u^*$ where $L_q \subseteq wu^*$.
From Lemma \ref{lem_quasi_period} it follows that 
$\shiftWord{\lcp(p)^{-1}\sem M_{p}(t)}{\shiftPeriod{q}{p}} \in u^*$.
To proof that $\sem {M^q}_{p^e}(t)\in u^*$ is
by induction on the input tree.
For an input tree $t$ with no subtrees we have
$\sem{M^q}_{p^e}(t) = \shiftWord{\lcp(p)^{-1}u_0 \dots u_n}{\shiftPeriod{q}{p}}$.
From Lemma \ref{lem_quasi_period} we know that $L_p$ is quasi-periodic of period $u'u''$ 
where $u=u''u'$ and $u'$ is of size $\shiftPeriod{q}{p}$. 
Thus, $\lcp(p)^{-1} u_0 \dots u_n \in (u'u'')^*$ and therefore
$\shiftWord{\lcp(p)^{-1}u_0 \dots u_n}{\shiftPeriod{q}{p}} \in (u'u'')^* = u^*$.
Hence, the base case holds.
Consider an input tree $t = f(s_1, \dots, s_n) \in \dom(p)$.
Then we have $\sem {M^q}_{p^e}(t)= \shiftWord{\lcp(p)^{-1}u_0\lcp(q_1) \dots \lcp(q_n) u_n}{\shiftPeriod{q}{p}} \sem {M^q}_{q_1^e}(s_{\sigma(1)}) \dots \sem {M^q}_{q_n^e}(s_{\sigma(n)})$.
By induction, $\sem {M^q}_{q_i^e}(s_{\sigma(i)})\in u^*$.
With the same argumentation as in the base case 
$\lcp(p)^{-1}u_0\lcp(q_1) \dots \lcp(q_n) u_n \in (u'u'')^*$ with 
$u=u''u'$ and $u'$ is of size $\shiftPeriod{q}{p}$.
Thus, $\shiftWord{\lcp(p)^{-1}u_0\lcp(q_1) \dots \lcp(q_n) u_n}{\shiftPeriod{q}{p}} \in (u''u')^* = u^*$ 
and therefore we get $\sem {M^q}_{p^e}(t) \in u^*$.
\end{proof}

\section{Proof of Lemma \ref{lem_proofAlgoQP}}
\begin{lem}Let $q$ be a state of an \ltw $M$ and $T^q$ be constructed by Algorithm \ref{algo_testQP}. 
Then $M$ and $T^q$ are same-ordered and $q$ is quasi-periodic on the left if and only if
$\sem {M}_q=\sem {T^q}$ and $q^e$ is periodic.
\end{lem}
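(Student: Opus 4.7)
The statement really packages three claims: (i) $M$ and $T^q$ are always same-ordered; (ii) if $q$ is quasi-periodic on the left, then $\sem M_q = \sem{T^q}$ and $q^e$ is periodic; (iii) conversely, if $\sem M_q = \sem{T^q}$ and $q^e$ is periodic, then $q$ is quasi-periodic on the left. I would dispatch (i) by inspection, (ii) by reducing to Lemma~\ref{lem_partial_earliest}, and (iii) by a short language-inclusion argument.

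\textbf{Same-orderedness.} Algorithm~\ref{algo_testQP} copies, rule by rule, the permutation $\sigma$ of each rule of $M$ into the corresponding rule of $T^q$. The only co-reachable pairs between $M$ and $T^q$ are $(p,p^e)$ for states $p$ accessible from $q$ (and identical states elsewhere), and each such pair has matching permutations in every rule. So same-orderedness holds by construction, independently of whether $q$ is quasi-periodic.

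\textbf{Forward direction.} The key observation is that, when $q$ is quasi-periodic on the left, Algorithms~\ref{earliestAlgo} and~\ref{algo_testQP} produce the same transducer. By Lemma~\ref{lem_quasi_period}, every state $p$ accessible from $q$ is either periodic with empty period or quasi-periodic on the left with a shifted period; in both cases the shortest word $w_p$ of $L_p$ coincides with $\lcp(L_p)$. By the remark preceding Algorithm~\ref{algo_testQP}, the pseudo-shift $\pseudoShift{q}{p}$ agrees with the genuine shift $\shiftPeriod{q}{p}$. Hence $T^q$ and $M^q$ (on accessible states) have identical rules, and Lemma~\ref{lem_partial_earliest} gives both $\sem M_q = \sem{T^q}$ and the earliestness of every $p^e$, in particular of $q^e$. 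The periodicity of $q^e$ is then a direct consequence: the same lemma shows $\sem{T^q}_{q^e}(t) = \shiftWord{\lcp(L_q)^{-1}\sem M_q(t)}{\shiftPeriod{q}{q}}$, whose set of values is contained in $w^*$ for $w$ the period of $q$; since $q^e$ is earliest, this inclusion cannot be a strict quasi-period, so $L_{q^e}$ is periodic.

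\textbf{Backward direction.} Assume $\sem M_q = \sem{T^q}$ and $q^e$ is periodic with period $w$, so $L_{q^e}\subseteq w^*$. The axiom of $T^q$ is $w_q\,q^e(x)$, hence $L_{T^q}\subseteq w_q w^*$. Since $\sem M_q = \sem{T^q}$ as partial functions on the common domain $\dom(q)=\dom(q^e)$, their ranges coincide and $L_q = L_{T^q} \subseteq w_q w^*$; by definition this makes $L_q$ quasi-periodic on the left.

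\textbf{Main obstacle.} The delicate step is the forward direction, where one has to justify that, for every state $p$ accessible from the quasi-periodic $q$, the shortest word $w_p$ of $L_p$ truly equals $\lcp(L_p)$ and that $\pseudoShift{q}{p}=\shiftPeriod{q}{p}$ all the way down the accessibility graph, including the degenerate case in which $p$ becomes periodic with empty period. Once that identification is in place the lemma reduces immediately to Lemma~\ref{lem_partial_earliest}; the backward direction is then essentially a one-line inclusion.
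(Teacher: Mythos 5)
Your proposal is correct and follows essentially the same route as the paper: same-orderedness by construction, the forward direction by observing that Algorithm~\ref{algo_testQP} coincides with Algorithm~\ref{earliestAlgo} when $q$ is quasi-periodic (so that Lemma~\ref{lem_partial_earliest} applies), and the backward direction by the one-line inclusion $L_q = L_{T^q} \subseteq w_q w^*$. You merely spell out in more detail than the paper does why the two algorithms agree (shortest word equals $\lcp$, pseudo-shift equals shift), which is a welcome addition rather than a deviation.
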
\begin{proof}
 We show that $q$ is quasi-periodic on the left if and only if
 $\sem {M}_q=\sem {T^q}$ and $q^e$ is periodic.
 If $q$ is quasi-periodic on the left the transformation in Algorithm \ref{algo_testQP}
 is the same as in Algorithm \ref{earliestAlgo}. Therefore 
 $\sem {M}_q=\sem {T^q}$ and $q^e$ is periodic.

If $\sem {M}_q=\sem {T^q}$ and $q^e$ is periodic,
then $\sem {M}_q$ is quasi-periodic as $\sem {T^q} = w_q\sem {T^q}_{q^e}$ 
with $w_q$ a shortest word of $L_q$. 
 
 $M$ and $T^q$ are same-ordered as the order of the rules in $T^q$ is 
 the same as in $M$ by construction.
\end{proof}

\section{Proof of Theorem \ref{lem_noMoreQuasi}}
\begin{thm} Let $M$ be an \ltw. Then an equivalent \ltw $M'$ where 
all quasi-periodic states are earliest can be 
constructed in polynomial time.
\end{thm}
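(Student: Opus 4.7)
The plan is to iterate the quasi-periodicity test and the earliest-construction over the states of $M$, relying on the machinery already developed in Lemmas~\ref{lem_quasi_period}, \ref{lem_partial_earliest}, and \ref{lem_proofAlgoQP}. Concretely, I would proceed as follows. For each state $q$ of $M$, first run Algorithm~\ref{algo_testQP} together with its right-symmetric variant to decide whether $q$ is quasi-periodic on the left or on the right; by Lemma~\ref{lem_proofAlgoQP} combined with Theorem~\ref{thm_eqSameOrder} (equivalence of same-ordered \ltws is in \PTIME) and the fact that periodicity of a CFG is decidable in polynomial time, these tests run in polynomial time provided that all words in rules are represented by SLPs. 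If $q$ is quasi-periodic (say on the left; the right case is symmetric), apply Algorithm~\ref{earliestAlgo} to obtain an equivalent transducer $M^q$ in which $q$ has been replaced by an earliest version $q^e$ together with a prefix $\lcp(L_q)$ attached at every occurrence; Lemma~\ref{lem_partial_earliest} gives both equivalence and the fact that every state of $M^q$ accessible from $q^e$ is already earliest.

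Next I would show that this replacement step can be executed in polynomial time using SLPs. The key operations performed by Algorithm~\ref{earliestAlgo} are: computing $w_p$, a shortest word of $L_p$, for states $p$ accessible from $q$; computing the shifts $\shiftPeriod{q}{p}$ (polynomial by Lemma~\ref{lem_quasi_period}); and, for each rule, stripping a known prefix of size $|\lcp(p)|$ and cyclically shifting the result. All of these are polynomial-time SLP operations by the results of~\cite{Lohrey2014}, so the size of every rule in the output transducer remains polynomial in the size of $M$ even though the actual words they denote may be exponential.

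I would then iterate this over the at-most-$|Q|$ states of $M$. The main subtlety, and in my view the principal obstacle, is to argue that the iteration terminates in polynomial time and does not reintroduce non-earliest quasi-periodic states. The essential observation is that Algorithm~\ref{earliestAlgo} only rewrites states accessible from $q$, and Lemma~\ref{lem_partial_earliest} guarantees that every such state becomes earliest; hence once $q$ has been processed, no state reachable from $q$ will require further treatment for left-quasi-periodicity. For states not accessible from $q$, their semantics and language are unchanged, so their quasi-periodicity status is preserved. After one pass removing all non-earliest left-quasi-periodic states, a second pass handles right-quasi-periodicity; because an earliest left treatment pushes a prefix out, and the symmetric right treatment pushes a suffix out, the two passes do not interfere (a state made left-earliest stays left-earliest under the right transformation, since removing a suffix cannot change $\lcp(L)$).

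Finally, I would bound the total work: at most $|Q|$ states, each processed once, each processing step polynomial in $\size{M}$ under SLP representation, so the overall algorithm runs in polynomial time and produces an equivalent \ltw $M'$ in which every quasi-periodic state is earliest. The polynomial bound on the intermediate sizes is the one step I would be most careful to verify in detail, since naive concatenations of $\lcp(q_i)$ into rule words could in principle accumulate across iterations; the SLP compression guarantees of~\cite{Lohrey2014} are exactly what prevents this blow-up.
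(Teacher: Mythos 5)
Your proposal is correct and follows essentially the same route as the paper's proof: iterate over the states, use Algorithm~\ref{algo_testQP} with Lemma~\ref{lem_proofAlgoQP} and Theorem~\ref{thm_eqSameOrder} to detect quasi-periodicity, apply Algorithm~\ref{earliestAlgo} with Lemma~\ref{lem_partial_earliest} to replace each such state, and invoke SLP operations from~\cite{Lohrey2014} to keep every step polynomial. The paper phrases the iteration as an induction on the number of non-earliest quasi-periodic states and likewise notes the accessibility-based processing order you identify as the key to avoiding repeated work, so the two arguments coincide in substance.
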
\begin{proof}
This proof works by induction. We first show that 
if $M = (\AlphIn, \Delta, Q,\,\ax,\,\rul)$ has $n$, $n\geqslant1$  
quasi-periodic states that are non-earliest, then
we can build in polynomial time an equivalent \ltw $M'$ with
$n-1$ non-earliest quasi-periodic states.
Using Algorithm \ref{algo_testQP} we choose $q$ as a non-earliest quasi-periodic state of $Q$.
We apply Algorithm~\ref{earliestAlgo} on state $q$ and get
$M^q$, whose set of state is of form
$Q\sqcup Q^e \backslash \lbrace q\rbrace$,
where $Q^e$ is the set of states $p^e$ with $p$ accessible from $q$ 
that are created by Algorithm~\ref{earliestAlgo}.
According to Lemma~\ref{lem_partial_earliest} all the states of $Q^e$
are periodic. This means that the non-earliest quasi-periodic states
of $M^q$ are all in $Q\backslash \lbrace q\rbrace$.
Since $Q$ has $n$ non-earliest quasi-periodic states,
including $q$, $M^q$ has $n-1$.

Now we can build $M_1$ equivalent to
$M$ with $n-1$ non-earliest quasi-periodic states,
then $M_2$ equivalent to
$M_1$ (hence to $M$) with $n-2$ non-earliest quasi-periodic states,
and so on. Finally we get $M_n$ equivalent to $M$ with no non-earliest quasi-periodic state.
Each step is in polynomial time and the number $n$ is smaller than
the number of states in $M$. 
For each occurence of a state on the right-hand side of a rule there is at most 
one new state needed in the construction. 
Therefore the size increase of the transducer is only polynomial.
To avoid the construction of equivalent states $q$ should be considered before 
$q'$ if $q'$ is accessible from $q$. If $q$ is accessible from $q'$ and 
$q'$ is accessible from $q$ then $q$ is considered first if there is a 
acyclic way from the axiom to $q$ that contains $q'$.

In the above proof we assumed that Algorithm \ref{algo_testQP} and 
\ref{earliestAlgo} run in polynomial time. 
In both algorithms it is crucial that SLPs are used to represent the 
shortest words of the languages produced by the states of a transducer 
as these can be of exponential size, cf. Example \ref{ex_expSize}.
Instead of these uncompressed words nonterminals representing these words 
as SLPs are inserted in the transducers.
All operations that are needed in the algorithms, 
namely constructing a SLP for the shortest word of an CFG, concatenation of SLPs, 
shifting the word produced by an SLP and 
removing the prefix or suffix of an SLP are in polynomial time~\cite{Lohrey2014}.

\end{proof}

\section{Proof of Theorem \ref{earliestPart}}
\begin{thm}
 For each \ltw $M$ where all quasi-periodic states are earliest  we can build in polynomial time an equivalent \ltw $M'$ such that each part 
 $q(x) u$ of a rule in $M$ 
 where $L_{q} u$ is quasi-periodic is earliest.
\end{thm}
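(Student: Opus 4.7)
The plan is to process the rules of $M$ one part at a time, using the already-established part-versions of Algorithm~\ref{algo_testQP} and Algorithm~\ref{earliestAlgo}. Concretely, for each occurrence of a part $q_i(x_{\sigma(i)}) u_i$ inside a rule $q,f \rightarrow u_0 q_1(x_{\sigma(1)}) u_1 \dots q_n(x_{\sigma(n)}) u_n$, I would introduce a fresh state $\hat q$ together with, for every rule $q_i, g \rightarrow v_0 p_1(x_{\tau(1)}) \dots p_m(x_{\tau(m)}) v_m$, a rule
$$\hat q, g \rightarrow v_0 p_1(x_{\tau(1)}) \dots p_m(x_{\tau(m)}) v_m u_i,$$
so that $L_{\hat q} = L_{q_i} u_i$, and replace the occurrence in the original rule by $\hat q(x_{\sigma(i)})$. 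I would then test $\hat q$ for quasi-periodicity with Algorithm~\ref{algo_testQP}; if it is quasi-periodic, I would apply Algorithm~\ref{earliestAlgo} to it to obtain an equivalent earliest copy $\hat q^e$ and replace $\hat q(x_{\sigma(i)})$ by $\lcp(L_{\hat q}) \hat q^e(x_{\sigma(i)})$, absorbing $\lcp(L_{\hat q})$ into the preceding constant $u_{i-1}$ of the rule. After this one step the chosen part is earliest and the transducer is still equivalent to $M$.

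To turn this local step into the full construction I would iterate it over all parts. The key structural observation is that Algorithm~\ref{earliestAlgo} applied to $\hat q$ produces, for every state $p$ accessible from $\hat q$, a copy $p^e$ whose rules have the shape
$$p^e, f \rightarrow u_p q_1^e(x_{\sigma(1)}) q_2^e(x_{\sigma(2)}) \dots q_n^e(x_{\sigma(n)}),$$
with \emph{no} interior output words between the recursive calls, and where each $q_j^e$ is earliest by Lemma~\ref{lem_partial_earliest}. In particular every part in every newly introduced rule is simply $q_j^e(x_{\sigma(j)})$ with empty suffix, which is already earliest. Hence one iteration strictly reduces the number of non-earliest quasi-periodic parts in the rules of the transducer, while adding only rules whose parts are already earliest. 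Since $M$ has only polynomially many parts to begin with, this process terminates after polynomially many iterations and produces an equivalent \ltw $M'$ in which every part $q(x) u$ with $L_q u$ quasi-periodic is earliest.

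The correctness of each step rests on the already-proved facts that $M$ and the intermediate transducer agree on semantics (the replacement is just an inlining plus a semantics-preserving application of Algorithm~\ref{earliestAlgo}, cf.\ Lemma~\ref{lem_partial_earliest}), and that earliness of quasi-periodic states is preserved because Algorithm~\ref{earliestAlgo} creates only earliest copies of accessible states. The main obstacle I expect is the polynomial bound on size and running time. The number of added states per iteration is bounded by the number of states accessible from $\hat q$, which is linear in the current transducer, and the number of iterations is bounded by the number of parts, which is linear in the size of $\rul$; but the words $u_p$ computed by Algorithm~\ref{earliestAlgo} must be kept in SLP form and manipulated via the SLP primitives invoked in Theorem~\ref{lem_noMoreQuasi} (shortest word, concatenation, prefix/suffix removal, cyclic shift) to avoid an exponential blow-up, exactly as in the proof of Theorem~\ref{lem_noMoreQuasi}. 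Carrying out this SLP bookkeeping carefully is the bulk of the work, but it parallels the analysis already done there and gives the claimed polynomial time bound.
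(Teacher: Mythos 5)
Your construction is essentially the paper's: introduce a fresh state $\hat q$ with $L_{\hat q}=L_{q_i}u_i$ by appending $u_i$ to the rules of $q_i$, test it with Algorithm~\ref{algo_testQP}, make it earliest with Algorithm~\ref{earliestAlgo}, and absorb $\lcp(L_{\hat q})$ into the preceding constant of the rule. The one genuine weak point is your termination argument. You claim that each local step \emph{strictly} reduces the number of non-earliest quasi-periodic parts, but this fails if the parts are processed in an arbitrary order: making $q_i(x_{\sigma(i)})\,u_i$ earliest rewrites $u_{i-1}$ into $u_{i-1}\lcp(L_{\hat q_i})$, and the \emph{new} part $q_{i-1}(x_{\sigma(i-1)})\,u_{i-1}\lcp(L_{\hat q_i})$ may be quasi-periodic and non-earliest even when the old part $q_{i-1}(x_{\sigma(i-1)})\,u_{i-1}$ was not, or had already been made earliest; the count can therefore stay constant or grow, and your invariant does not justify polynomially many iterations. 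The paper closes exactly this hole by fixing the order of processing: each rule is treated from right to left (index $n$ down to $1$), so that when part $i$ is handled all parts to its right are final and the side effect of the replacement only touches the not-yet-visited part $i-1$; each part is then processed exactly once and the bound of at most $n$ applications of the two algorithms per rule is immediate. Your argument is repairable --- either adopt the right-to-left order, or bound the leftward cascade of reprocessings by $O(n^2)$ per rule --- but as written the counting step would not go through. The rest (semantics preservation via Lemma~\ref{lem_partial_earliest}, the observation that the freshly created rules have no interior words, and the SLP bookkeeping mirroring Theorem~\ref{lem_noMoreQuasi}) matches the paper.
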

\begin{proof}
First, we show that one part $q(x) u$ of a rule where $L_q u$ is quasi-periodic on the left 
can be replaced by their earliest form.
We can apply Algorithm \ref{algo_testQP} and \ref{earliestAlgo} on a part $q(x) u$
of a rule that is quasi-periodic by replacing $q(x) u$ by a new state $\hat{q}$.
Therefore each occurrence of $q(x) u$ in any rule is replaced by $\hat{q}(x)$ 
and for each rule $\ruleltw$ a rule 
$\hat{q}, f \rightarrow u_0 q_1(x_{\sigma(1)}) \dots q_n(x_{\sigma(n)}) u_n u$ 
is added. Then we can apply Algorithm \ref{algo_testQP} on $\hat{q}$ to test 
$q(x) u$ for quasi-periodicity on the left. 
If $\hat{q} u$ is quasi-periodic on the left we apply Algorithm \ref{earliestAlgo} 
on $\hat{q}$ to replace $q(x) u$ by their earliest form.

Second, we show that for any rule of an \ltw an equivalent rule can be constructed 
such that all parts $q(x) u$ in the rule where $L_q u$ is quasi-periodic on the left 
are earliest. Consider a rule \ruleltw.
Replacing a part $q(x) u$ with $L_q u$ is quasi-periodic by their earliest form as 
described above means that all occurrences of $\hat{q}(x)$ 
are replaced by $\lcp(\hat{q}) \hat{q}^e(x)$. Thus, to replace all parts of a rule 
that produce quasi-periodic languages by their earliest form the testing 
and replacing should be done from right to left as the earliest form may introduce 
new words on the left of the replaced state.
For a rule \ruleltw we start by testing $q_n(x_{\sigma(n)}) u_n$ for quasi-periodicity 
on the left and replace the part if necessary as described above. If so, we obtain 
$q, f \rightarrow u_0 q_1(x_{\sigma(1)}) \dots q_{n-1}(x_{\sigma(n-1)}) u_{n-1} \lcp(\hat{q_n}) \hat{q_n}^e(x_{\sigma(n)})$ 
and continue with testing $q_{n-1}(x_{\sigma(n-1)}) u_{n-1} \lcp(\hat{q_n})$ for 
quasi-periodicity on the left. If not, then we continue with testing 
$q_{n-1}(x_{\sigma(n-1)}) u_{n-1}$ for quasi-periodicity 
on the left. Following this construction for the rule from right to left, i.e. from index $n$ to $1$, 
leads to an equivalent rule where all parts $q(x) u$ with $L_q u$ is quasi-periodic on the left 
are earliest.

The construction runs in polynomial time as Algorithm \ref{algo_testQP} and \ref{earliestAlgo} 
run in polynomial time (for details see the proof of Theorem \ref{lem_noMoreQuasi}) and 
for each rule \ruleltw the algorithms are applied at most $n$ times.

\end{proof}

\section{Proof of Lemma \ref{lem_subtreeorder}}
\begin{lem}
 If $M$ and $M'$ are equivalent and in partial normal form then they are same-ordered.
\end{lem}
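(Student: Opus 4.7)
The plan is to argue by contradiction. Suppose $M$ and $M'$ are equivalent and both in partial normal form but not same-ordered. Then there exist co-reachable states $q$ in $M$ and $q'$ in $M'$, together with an input symbol $f$, whose rules use distinct permutations $\sigma_1 \neq \sigma_2$. The partial normal form satisfies every hypothesis of Theorem~\ref{thm_periodic} (erase-ordered, all quasi-periodic states earliest, and every part $q(x)u$ with $L_q u$ quasi-periodic earliest), so I can apply that theorem directly: whenever $\sigma_1(k) = \sigma_2(l)$ with $k < l$, the states $q_k, \dots, q_l$ of $M$ are periodic of a common period and each $u_j$, $k \le j < l$, equals $\varepsilon$. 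The symmetric application to $M'$ gives the analogous statement whenever $\sigma_2(k) = \sigma_1(l)$ with $k < l$.

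Next I exploit condition~3 of the partial normal form on these stretches: for each adjacent pair $q_i, q_{i+1}$ inside such an interval, the product $L_{q_i} u_i L_{q_{i+1}} = L_{q_i} L_{q_{i+1}}$ is periodic, hence quasi-periodic, so the condition forces $\sigma_1(i) < \sigma_1(i+1)$. Iterating, $\sigma_1$ is strictly increasing on every interval produced by the theorem in $M$, and symmetrically $\sigma_2$ is strictly increasing on every such interval in $M'$. I package this by partitioning the right-hand sides of both rules into maximal \emph{periodic blocks}, i.e.\ maximal intervals of consecutive periodic states of a common period with empty intermediate words; every crossing index where $\sigma_1$ and $\sigma_2$ disagree lives inside such a periodic block in $M$, in $M'$, or both.

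To close the argument I would show that the block of $M$ around the smallest index $k_0$ of disagreement and the block of $M'$ around $k_0$ ultimately enumerate the same set of input indices. Two strictly increasing enumerations of the same finite set coincide, so this forces $\sigma_1(k_0) = \sigma_2(k_0)$, contradicting the choice of $k_0$. The main obstacle is this block-alignment step: write $l = \sigma_2^{-1}(\sigma_1(k_0)) > k_0$ and $l' = \sigma_1^{-1}(\sigma_2(k_0)) > k_0$; assuming w.l.o.g.\ $\sigma_1(k_0) < \sigma_2(k_0)$, the fact that $\sigma_2$ is strictly increasing on $[k_0, l']$ and $\sigma_2(l) = \sigma_1(k_0) < \sigma_2(k_0)$ places $l$ outside $[k_0, l']$ and so $l > l'$; then $l'$ lies inside the $M$-block $[k_0, l]$ and gives rise to a new crossing that can be fed back into Theorem~\ref{thm_periodic}. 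I expect that iterating this step, together with its mirror image on the $M'$-side, forces the two blocks to merge into a single interval on which $\sigma_1$ and $\sigma_2$ are strictly increasing enumerations of the same image set, yielding the contradiction.
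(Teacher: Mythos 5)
Your proposal assembles the right ingredients---Theorem~\ref{thm_periodic} applied to a crossing pair $\sigma_1(k)=\sigma_2(l)$, plus condition~3 of the partial normal form to force $\sigma$ to be strictly increasing across any stretch of adjacent periodic states of a common period with empty intermediate words---and these are exactly the two facts the paper's proof uses. However, the argument is not complete: the entire contradiction rests on the ``block-alignment step,'' which you yourself flag as the main obstacle and only describe as an iteration you \emph{expect} to terminate with the two blocks enumerating the same set of input indices. Nothing in the write-up establishes that the maximal periodic block of $M$ around $k_0$ and that of $M'$ around $k_0$ cover the same index interval, nor that the proposed back-and-forth of crossings converges rather than, say, generating new crossings indefinitely or escaping the blocks. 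Without that, the ``two strictly increasing enumerations of the same finite set coincide'' punchline has no set to apply to, so the proof does not close.

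The paper avoids block alignment entirely with a small combinatorial trick you may want to adopt. It chooses a window $[i,j]$ minimal under the constraints that $\{\sigma(k) \mid i\le k\le j\}=\{\sigma'(k)\mid i\le k\le j\}$ and that some disagreement occurs inside. Because the two permutations enumerate the same set in different orders on this window, at least one of them has an adjacent descent, say $\sigma(l)>\sigma(l+1)$; and minimality guarantees a crossing pair $s\le l<t$ with $\sigma(s)=\sigma'(t)$ straddling position $l$ (otherwise $[i,l]$ or $[l+1,j]$ would be a smaller window satisfying the constraints). Theorem~\ref{thm_periodic} applied to $(s,t)$ makes $q_l,q_{l+1}$ periodic of the same period with $u_l=\varepsilon$, so $L_{q_l}u_lL_{q_{l+1}}$ is quasi-periodic and condition~3 of the partial normal form forces $\sigma(l)<\sigma(l+1)$---an immediate contradiction, with no need to compare blocks across the two transducers. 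If you replace your alignment iteration with this minimal-window/descent argument, your proof goes through.
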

\begin{proof}
The proof is by contradiction.
Suppose that $M,\,M'$ are two equivalent \ltws in partial normal
form that are not same-ordered.
We consider co-reachable state $q$ and $q'$ 
of $M$ and $M'$ that are not same-ordered. Then for
$q,f \rightarrow u_0 q_1(x_{\sigma(1)}) \dots q_n(x_{\sigma(n)}) u_n$ and
$q',f \rightarrow v_0 q'_1(x_{\sigma'(1)}) \dots q'_n(x_{\sigma'(n)}) v_n$
we choose $i,j$, $i < j$ such that $j-i$ is minimal under the following constraints
\begin{compactitem}
 \item $\{\sigma(k)\ |\ i \leq k \leq j\} = \{\sigma'(k)\ |\ i \leq k \leq j\}$ and
 \item there is $k$, $i \leq k \leq j$ such that $\sigma(k) \neq \sigma'(k)$.
\end{compactitem}
As the set of indices of $\sigma(k)$ and $\sigma'(k)$ between $i$ and $j$ is the 
same but in different orders 
 there is $l$, $i \leq l \leq j$ such that $\sigma(l) > \sigma(l+1)$ or 
$\sigma'(l) > \sigma'(l+1)$. W.l.o.g.\ we assume that $\sigma(l) > \sigma(l+1)$. 
Additionally, from the above constraints it follows that there is $s, t$, $i \leq s \leq l < t \leq j$ 
such that $\sigma(s) = \sigma'(t)$. 
As the partial normal form satisfies the preconditions of Theorem~\ref{thm_periodic} 
we get that $q_s, \dots, q_t$ are periodic 
with the same period and $u_s, \dots, u_{t-1}$ are empty. 
Thus, $L_{q_l} u_l L_{q_{l+1}}$ is quasi-periodic. 
Then it follows from the partial normal form that $\sigma(l) < \sigma(l+1)$, a contradiction.

\end{proof}

\section{Proof of Theorem \ref{thm_main}}
\begin{thm}
 The equivalence problem for linear tree-to-word transducers is decidable in polynomial time.
\end{thm}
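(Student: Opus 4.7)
The plan is to simply chain together the machinery assembled in Sections~\ref{sec_transducer} and~\ref{sec_PNF}. Given two \ltws $M_1$ and $M_2$ over the same ranked alphabet $\AlphIn$ and output alphabet $\AlphOut$, I would first invoke Theorem~\ref{thm_normalForm} to produce, in polynomial time, equivalent \ltws $M_1^{\star}$ and $M_2^{\star}$ that are in partial normal form. Concretely, this amounts to performing the four-step procedure summarized before Theorem~\ref{thm_main}: replace each quasi-periodic state by its earliest form via Algorithm~\ref{earliestAlgo} (using Algorithm~\ref{algo_testQP} to detect quasi-periodicity), enforce the erase-ordered property through Lemma~\ref{lem_eraseOrdered}, eliminate the longest common prefix of quasi-periodic parts $q(x)u$ of rules via Theorem~\ref{earliestPart}, and finally reorder adjacent periodic states of the same period via Lemma~\ref{lem_switch}. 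Since the output is guaranteed to be in partial normal form and each step is polynomial, $M_1^{\star}$ and $M_2^{\star}$ are constructible in polynomial time (with SLP-compression used throughout to avoid exponential blow-up in the right-hand sides of rules, as discussed after Example~\ref{ex_expSize}).

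Next, I would test $M_1^{\star}$ and $M_2^{\star}$ for equivalence. By Lemma~\ref{lem_subtreeorder}, if $M_1^{\star}$ and $M_2^{\star}$ are equivalent then they are same-ordered. Equivalently, if they are not same-ordered then they are not equivalent, and this can be detected in polynomial time by inspecting the pairs of rules of co-reachable states (co-reachability is itself polynomial). If the two transducers are same-ordered, I would then apply Theorem~\ref{thm_eqSameOrder}: reorder the subtrees according to the rule orders, reducing the problem to equivalence of \stws, which is decidable in polynomial time through the morphism-equivalence reduction of~\cite{Staworko2009,Plandowski1995}. Because $\sem{M_i^{\star}} = \sem{M_i}$ for $i\in\{1,2\}$, the answer for the normalized pair is the answer for the original pair.

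There is no genuine obstacle left at this stage since every individual ingredient has been established earlier in the paper; the proof is essentially a bookkeeping argument, and the only care required is to invoke the steps in the right order so that the preconditions of each lemma are met (in particular, eliminating non-earliest quasi-periodic states \emph{before} enforcing erase-orderedness, as pointed out just before Theorem~\ref{thm_normalForm}). The overall running time is polynomial because each of the normalization steps and the final same-ordered equivalence check is polynomial, and their composition produces a polynomial-time decision procedure for equivalence of arbitrary \ltws.
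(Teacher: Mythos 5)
Your proposal is correct and follows exactly the paper's own argument: normalize both transducers into partial normal form via Theorem~\ref{thm_normalForm}, use Lemma~\ref{lem_subtreeorder} to conclude non-equivalence when the results are not same-ordered, and otherwise decide equivalence via Theorem~\ref{thm_eqSameOrder}. The additional detail you give about the four normalization steps and the ordering of preconditions matches the summary preceding the theorem in the paper.
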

\begin{proof}
 Let $M$ and $M'$ be two linear tree-to-word transducers. 
 We construct equivalent \ltws $M_1$ and $M'_1$, respectively, such that 
 $M_1$ and $M_2$ are in partial normal form following Theorem \ref{thm_normalForm}.
 We then test if $M_1$ and $M'_1$ are same-ordered. 
 If they are same-ordered we test $M_1$ and $M'_1$ for equivalence, see Theorem~\ref{thm_eqSameOrder}. 
 If $M_1$ and $M'_1$ are not same-ordered we know following Lemma \ref{lem_subtreeorder} 
 that $M_1$ and $M'_1$ are not equivalent and therefore $M$ and $M'$ are not equivalent.
\end{proof}

\end{appendix}

\end{document}